\documentclass[11pt]{article}

\usepackage{times}
\usepackage{amsthm}
\usepackage{array}
\usepackage{amssymb}
\usepackage{amsmath}
\usepackage{dsfont}
\usepackage{paralist}
\usepackage{graphicx}
\usepackage[]{units}
\usepackage{mathrsfs}
\usepackage{tikz} 
\usepackage[subrefformat=parens]{subcaption}

\pdfinfo{
/Title (What Do We Elect Committees For? A Voting Committee Model for Multi-Winner Rules)
/Author (Piotr Skowron) }

\newtheorem{definition}{Definition}
\newtheorem{proposition}{Proposition}

\newtheorem{theorem}{Theorem}

\newtheorem{example}{Example}
\allowdisplaybreaks
\sloppy

\newcommand{\argmax}{{{\mathrm{argmax}}}}
\newcommand{\indicator}{{{\mathds{1}}}}

\newcommand{\naturals}{{{\mathbb{N}}}}
\newcommand{\reals}{{{\mathbb{R}}}}
\newcommand{\prob}{{{\mathbb{P}}}}
\newcommand{\pow}{{{\mathcal{P}}}}
\newcommand{\mw}{{{\mathcal{R}_\mathrm{mult}}}}
\newcommand{\sw}{{{\mathcal{R}_\mathrm{dec}}}}

\newcommand{\ap}{{{\mathrm{ap}}}}

\newcommand{\acc}{{{\mathrm{A}}}}
\newcommand{\rej}{{{\mathrm{R}}}}
\newcommand{\owa}{{\textsc{OWA}}}

\newcommand{\important}{{{D_{\mathrm{im}}}}}
\newcommand{\np}{{\mathrm{NP}}}

\setlength{\oddsidemargin}{0.25in}
\setlength{\evensidemargin}{\oddsidemargin}
\setlength{\textwidth}{6in}
\setlength{\textheight}{8in}
\setlength{\topmargin}{-0.0in}

\title{What Do We Elect Committees For? \\A Voting Committee Model for Multi-Winner Rules\footnote{The preliminary version of this paper was presented at the 24th International Joint Conference on Artificial Intelligence (IJCAI-2015) and on the 13th Meeting of the Society for Social Choice and Welfare in 2016 (SSCW-2016).}}

\author{Piotr Skowron \\ University of Oxford \\ Oxford, UK \\ piotr.skowron@cs.ox.ac.uk }
\date{}

\begin{document}

\maketitle

\begin{abstract}
We present a new model that describes the process of electing a group of representatives (e.g., a parliament) for a group of voters. In this model, called the voting committee model, the elected group of representatives runs a number of ballots to make final decisions regarding various issues. The satisfaction of voters comes from the final decisions made by the elected committee. Our results suggest that depending on a decision system used by the committee to make these final decisions, different multi-winner election rules are most suitable for electing the committee. Furthermore, we show that if we allow not only a committee, but also an election rule used to make final decisions, to depend on the voters' preferences, we can obtain an even better representation of the voters. 
\end{abstract}

\section{Introduction}

There are various scenarios where a \emph{group of representatives} is selected to make decisions on behalf of a larger population of voters. The examples of such situations include parliamentary elections, elections for supervisory or faculty board, elections for the trade union, etc. In such scenarios, the elected group of representatives, also referred to as a \emph{committee}, runs a sequence of ballots making decisions regarding various issues. For instance, these can be decisions made by the elected parliament regarding financial economics, national health-care system, retirement age, or changes in the specific law acts. In such case, it is natural to judge the quality of the elected committee based on the quality of its decisions. In this paper we explore this idea and introduce new approach that allows for a normative comparison of various multiwinner election rules.

We introduce and study a new formal model, hereinafter referred to as the \emph{voting committee model}. In this model we assume that an elected committee runs a sequence of independent ballots, in which it makes collective decisions regarding a given set of issues. We assume that the ultimate satisfaction of the voters depends solely on the final decisions made by the committee. Consequently, each voter ranks candidates for the committee based on how likely it is that they vote according to his or her preferences. The voting committee model allows to numerically assess qualities of committees, depending on what election rule (refered to as a \emph{decision rule}) these committees use to make the final decisions. Intuitively, the numerical quality of a committee $S$ estimates how likely it is that $S$ makes decisions consistent with voters' preferences. 

There are many works that study scenarios in which a group of representatives is elected to make certain decisions. To the best of our knowledge, however, starting with the famous Condorcet's Jury Theorem~\cite{citeulike:896755} most of these works~\cite{optimalVotingRules, RePEc:cla:levarc:1560, citeulike:3441219, Myerson97extendedpoisson} model scenarios where there exists some ground truth and when the decisions made by the elected committee are either objectively correct or wrong. Thus, these models refer to the process of selecting a group of experts. Our approach is different---there is no ground truth and the voters can differ with their opinions regarding various issues. Informally speaking, a committee is good if it well represent voters' subjective opinions rather than if it can solve certain specific problems having objectively good or bad solutions. 

\subsection*{Our Contribution}

In this work we introduce a new formal probabilistic model that relates voters' satisfactions from a committee to their satisfactions from the committee's decisions. In our model we define the notion of optimality of a multi-winner election rule, given that a certain decision rule is used by the elected committee to make the final decisions.
We consider two approaches in our probabilistic voting committee model. In the first approach we consider voters which can be correlated with respect to their preferences regarding various issues. We study two specific cases, namely:
\begin{inparaenum}[(i)]
\item when voters can be represented as uniformly distributed points on the interval (in such case points can represent, for instance, positions of the corresponding voters in the left-right political spectrum), and
\item when preferences of voters are taken from publicly available datasets containing distributions of votes in numerous real-life voting scenarios~\cite{conf/aldt/MatteiW13}.
\end{inparaenum} 

In this approach we use the voting committee model, and through computer simulations we obtain several interesting conclusions. In particular, we observe that representation-focused multi-winner election rules, such as the Chamberlin--Courant rule~\cite{ccElection}, make better decisions with respect to voters' preferences in comparison to the other committee selection rules that we study. Somehow surprisingly, according to our simulations the Chamberlin--Courant rule is even superior to Proportional Approval Voting, a multiwinner extension of the d'Hondt proportional method of apportionment. Further, we observe that in our experiments proportional committees make significantly better decisions than a single representative would make. Last, but not least, we argue that decisions made by the elected committee through majority rule are better than those made by the committee through the random dictatorship rule. 

We also consider the specific case when each voter has a single, primarily important for him or her, issue. In this case the probabilistic model collapses to the deterministic one. Here, we obtain theoretical justification for the following claims. We argue that the top-$K$ rules, i.e., scoring rules~\cite{you:j:scoring-functions} that select $K$ candidates with the highest total score assigned by voters, are suitable for electing committees that use the random dictatorship rule to make decisions. It might seem that the significance of this observation is compromised by the low applicability of randomized election rules. Such randomized decision-making processes, however, model situations where decisions are made by individual members of the committee but we are uncertain which issues will be considered by which individuals. We also observe that the median $\owa$ rule~\cite{owaWinner} is suitable for electing committees that make final decisions by majority voting. Informally speaking, in the median $\owa$ rule a voter is satisfied with a committee $S$ if he or she is satisfied with at least half of the members of $S$. We recall the formal definition of the class of the $\owa$ rules in Section~\ref{sec:preliminaries}. Further, our analysis suggests that representation focused multi-winner election rules  are particularly well-suited for electing committees that need to make unanimous decisions. Unanimity is often required in situations where making a wrong decision implies severe consequences, e.g., in case of juries voting on convictions. Indeed, in such cases we are particularly willing to ensure that minorities are represented in the committees to avoid biased decisions.

In the second approach in our probabilistic model we consider independent voters. In such case we prove the existence of the optimal voting rule and show how to construct one. Discussion of this approach leads to the new very interesting concept---to the notion of a \emph{full multiwinner rule}. Informally speaking, a full multiwinner rule is the concept that allows voters to elect not only committees, but also the rules used by these committees to make decisions. We discuss the existence of optimal full multiwinner rules in Section~\ref{sec:indirect_rules}.

\subsection*{Related Work}

Our research is most closely related to the work of Koriyama et~al.~\cite{RePEc:ucp:jpolec:doi:10.1086/670380}, who alike assume that the elected committee makes a number of decisions in a sequence of ballots. In such model, they compute the frequency, with which the will of each individual is implemented. The difference is, however, that Koriyama et~al. consider the party list setting and the problem of apportionment, i.e., of allocating seats to the parties according to the numbers of votes the parties receive from voters. The authors show that certain assumptions regarding how the society evaluates the frequencies with which the opinions of multiple individuals is implemented, justify degressive proportionality, an interesting principle of proportional apportionment.

Similarly to our work, Casella~\cite{journals/geb/Casella05, StorableVotesBook} considers the model where a committee meets regularly over time to make a sequence of binary decisions. She defines Storable Votes, a multiple-issue voting system aimed at promoting rights of minorities. The strategic behavior of committee members voting on various issues is also captured in an innovative model by Colonel Blotto games~\cite{Laslier2002106}. 

Our work extends the literature on properties of multi-winner election rules~\cite{Simeone.Pukelsheim2006MathematicsandDemocracy,barberaNonControversial,elk-fal-sko-sli:c:multiwinner-rules,justifiedRepresenattion,geh:j:condorcet-committee,deb:j:k-borda}. This literature includes, e.g., the works of Barber\'a and Coelho~\cite{barberaNonControversial}, where the authors define properties that ``good'' multi-winner rules should satisfy. Elkind~et~al.~\cite{elk-fal-sko-sli:c:multiwinner-rules} argue that the desirability of many natural properties of multi-winner rules must be evaluated in the context of their specific applications. Our paper extends this discussion by showing an intuitive model and concrete examples concerning this model, for which different multi-winner rules are particularly well applicable.

Similarly to Fishburn~\cite{fishburn1981}, we explore the idea of comparing multi-winner election systems. Further, our perspective is conceptually close to the ones given by Christian~et~al.~\cite{lobbying}, who study computational problems related to lobbying in direct democracy, where the decisions are made directly by the voters who express their preferences in open referenda (there are no representatives).

Many multi-winner election systems are defined as functions selecting such committees that optimize certain metrics, usually related to voters satisfaction~\cite{Bock1998219,ccElection,monroeElection,minimaxProcedure,pavVoting,deb:j:k-borda}. These different optimization metrics capture certain desired properties of election systems. Our paper complements these works by presenting specific metrics that are motivated by the analysis of decision-making processes of committees.

Finally, we note that there exists a broad literature on voting on multi-attribute domains, where agents need to make collective decisions regarding a number of issues~\cite{paradoxMultipleElections,RePEc:sae:jothpo:v:12:y:2000:i:1:p:5-31,Xia:2008:VMD:1619995.1620029,Xia10aggregatingpreferences}. We differ from these works by considering an indirect process of decision-making, through an elected committee.

\section{Illustrative Example}\label{sec:illustrativeExample}

Consider a multiwinner election with six voters, eight candidates, and where the goal is to select a committee of three representatives. Assume that for each voter there exists a single important issue that he or she cares about; an issue can be, for instance, the decision of decreasing the taxes, of decreasing the retirement age, or the decision of making a specific change in the national health care system. The elected committee will vote on each issue and make decision of either accepting or rejecting it; assume that the elected committee will make such decisions by majority voting. Further, assume that the voters know the views of the candidates on the issues and that a voter $i$ approves of a candidate $a$ if and only if $i$ and $a$ have the same view on the $i$'s important issue.

Consider example preferences of voters over candidates illustrated in Table~\ref{tab:exampleApprovalProfileA}. Observe that committee $S = \{c_1, c_2, c_3\}$ would make decisions consistent with preferences of voter $i_1$. Indeed, for the $i_1$'s important issue, the majority of $S$, namely candidates $c_1$ and $c_2$, have preferences consistent with $i_1$. Similarly, for the issues important for voters $i_2$, $i_3$, $i_4$ and $i_6$, committee $S$ would make decisions consistent with the preferences of these voters. Informally speaking, five out of six voters would be satisfied with the decisions made by committee $S$. Any other committee would make decisions satisfying less voters. Consequently, in this scenario committee $S$ should win the election.

\begin{table}[!tbh]
    \caption{Example of preferences of the voters over candidates. Values '0' and '1' in the fields denote that corresponding voters approve of or disapprove of the corresponding candidates, respectively. For instance, voter $i_1$ in Table~\subref{tab:exampleApprovalProfileA} approves of candidates $c_1$, $c_2$, $c_5$, and $c_8$. The same voter in Table~\subref{tab:exampleApprovalProfileB} approves of candidates $c_1$ and $c_8$.}
    \begin{subtable}{.48\linewidth}
      \centering
        \caption{}\label{tab:exampleApprovalProfileA}
        \begin{tabular}{c | c c c c c c c c} 
        \hline
        & ${\bf c_1}$ & ${\bf c_2}$ & ${\bf c_3}$ & $c_4$ & $c_5$ & $c_6$ & $c_7$ & $c_8$ \\
        \hline\hline
        $i_1$ & {\bf 1} & {\bf 1} & {\bf 0} & 0 & 1 & 0 & 0 & 1 \\ 
        \hline
        $i_2$ & {\bf 1} & {\bf 1} & {\bf 1} & 1 & 1 & 1 & 0 & 0 \\
        \hline
        $i_3$ & {\bf 1} & {\bf 0} & {\bf 1} & 1 & 0 & 0 & 1 & 0 \\
        \hline
        $i_4$ & {\bf 0} & {\bf 1} & {\bf 1} & 0 & 0 & 0 & 1 & 0 \\
        \hline
        $i_5$ & {\bf 1} & {\bf 0} & {\bf 0} & 1 & 1 & 1 & 0 & 0 \\
        \hline
        $i_6$ & {\bf 0} & {\bf 1} & {\bf 1} & 0 & 0 & 1 & 0 & 1 \\
        \hline
        \end{tabular}
    \end{subtable}%
    \quad
    \begin{subtable}{.48\linewidth}
      \centering
        \caption{}\label{tab:exampleApprovalProfileB}
        \begin{tabular}{c | c c c c c c c c} 
        \hline
        & ${\bf c_1}$ & ${\bf c_2}$ & $c_3$ & $c_4$ & $c_5$ & $c_6$ & $c_7$ & ${\bf c_8}$ \\
        \hline\hline
        $i_1$ & {\bf 1} & {\bf 0} & 0 & 0 & 0 & 0 & 0 & {\bf 1} \\ 
        \hline
        $i_2$ & {\bf 0} & {\bf 1} & 0 & 1 & 0 & 0 & 0 & {\bf 0} \\
        \hline
        $i_3$ & {\bf 1} & {\bf 0} & 0 & 0 & 0 & 0 & 0 & {\bf 0} \\
        \hline
        $i_4$ & {\bf 0} & {\bf 0} & 1 & 0 & 0 & 0 & 1 & {\bf 1} \\
        \hline
        $i_5$ & {\bf 1} & {\bf 0} & 0 & 0 & 1 & 0 & 0 & {\bf 0} \\
        \hline
        $i_6$ & {\bf 0} & {\bf 1} & 0 & 0 & 0 & 0 & 0 & {\bf 1} \\
        \hline
        \end{tabular}
    \end{subtable} 
\end{table}

The above arguments can be used to design a voting rule that would be optimal for dichotomous voters' preferences. In such a rule, a voter $i$ approves of a committee $S$ if majority of the members of $S$ are approved by $v$; the committee approved by most voters is announced as the winner. We will formalize this argument later in Theorem~\ref{thm:majority}.

Now, let us change our initial assumptions and consider the case when the elected committee uses random dictatorship rule to make decisions regarding issues. Let us recall that according to the random dictatorship rule, for each issue a single committee member is selected uniformly at random, and the selected member makes a decision on the issue. Observe that in such case, voter $i_1$ is less satisfied with committee $S = \{c_1, c_2, c_3\}$ than before. Indeed, on $i_1$'s important issue, committee $S$ will make decision consistent with $i_1$'s preferences only with probability equal to $\nicefrac{2}{3}$. The expected number of voters satisfied with the decisions made by committee $S$ is equal to four, and for any other committee the expected number of satisfied voters is even lower. We conclude that in this case it is socially more beneficial if the elected committee uses majority voting rather than random dictatorship rule. The natural question arises whether this is always the case. To answer this question consider preferences of voters from Table~\ref{tab:exampleApprovalProfileB}. For these preferences, the decisions made by the optimal committee via majority rule will satisfy only two out of six voters. On the other hand, committee $S' = \{c_1, c_2, c_8\}$ using random dictatorship would satisfy, in expectation, $2\nicefrac{2}{3}$ voters.

The above observations suggest that it is justified to consider committee selection rules where not only the winning committee, but also the rule used by the elected committee to make decisions, depends on the voters' preferences. In Section~\ref{sec:indirect_rules} we formalize this concept, by introducing \emph{full multiwinner rules}. A full multiwinner rule is a function that given voters' preferences returns a pair: a given-size committee and a randomized decision rule over the set of two alternatives (\emph{accept} and \emph{reject}).

In the further part of this paper we will explain how to generalize the above reasoning beyond dichotomous preferences of the voters. Briefly speaking, we will assume that the voters do not know how the candidates are going to vote over the issues. Instead, we will assume that each voter has some intuition about how well he or she can be represented by a given candidate; formally, we will assume that each voter $v$ is able to assess the probability with which a given candidate will vote according to $v$'s preferences. This captures, for instance, a common scenario where the issues are not known in advance, thus the voters must use their beliefs to decide on how well given candidates will represent them.

\section{The Voting Committee Model}

In this section we describe a voting committee model that formalizes intuitions given is Section~\ref{sec:illustrativeExample}, and that allows for a normative comparison of various multi-winner election rules. For each set $X$, by $\indicator_X$ we denote the indicator function of $X$, i.e., $\indicator_X(x) = 1$ if $x \in X$, and $\indicator_X(x) = 0$ if $x \notin X$. For simplicity, we write $\indicator_X$ instead of $\indicator_X(x)$ whenever $x$ is clear from the context.

Let $N = \{1, 2, \dots, n\}$ be a set of \emph{voters}, and let $C = \{c_1, c_2, \dots, c_m\}$ be a set of \emph{candidates}. We assume there is a set $D = \{D_1, D_2, \dots, D_r\}$ of $r$ issues; each issue $D_j$ is a binary set consisting of two alternatives $D_j = \{\acc, \rej\}$. Intuitively, the issue can be either accepted or rejected; $\acc$ and $\rej$ corresponds to accepting and rejecting it, respectively. Voters have strict preferences over the alternatives within each issue; by $d_i^j$ we denote the preferred alternative from $D_j$ from the perspective of voter $i$. The issues might differ in their importance to different voters. We consider two types of attitudes: a voter $i$ can consider an issue $D_j$ either as \emph{important} or as \emph{insignificant}. We denote the set of all the issues important for voter $i$ as $\important(i) \subseteq D$.

In the first stage in our model, a committee $S$ is selected through a multi-winner election rule $\mw$; the selected committee consists of $K$ members. In the second stage, the committee $S$ runs $r$ independent ballots, for each ballot using the same (randomized) decision rule $\sw$. In the $i$-th ballot, $1 \leq i \leq r$, the committee makes a collective decision regarding issue $D_i$---for $D_i$ the committee makes either decision $\acc$ or decision $\rej$. The final outcome of this two-stage process is described by a vector of $r$ decisions. Intuitively, the first stage of elections might correspond to e.g., parliamentary elections, elections for supervisory or faculty board, etc. An election in the second stage might be viewed as e.g., a parliamentary ballot on an issue regarding, e.g., financial and monetary economics, education politics, changes in national health-care system, etc.

A multiwinner election rule $\mw$ takes as input a matrix of scores that voters assign to the candidates and returns a subset of $K$ candidates.
The scores might be provided directly by the voters or extracted from their ordinal preferences through a positional scoring function (we will discuss this issue in more detail in the subsequent sections).
A \emph{(randomized) decision rule} $\sw$ is a function $\sw: \naturals \to \reals$, that for each natural number $a$, corresponding to the number of votes casted on $\acc$ by the committee members, returns the probability that value $\acc$ is selected. We require that $\sw$ is symmetric with respect to decisions $\acc$ and $\rej$, i.e., that for each $a$, $0 \leq a \leq K$, it holds that $\sw(a) = 1 - \sw(K-a)$.

The ultimate satisfaction\footnote{We write ``ultimate satisfaction'' instead of ``satisfaction'' to distinguish these values, representing utilities that voters get from decisions of a committee from the scores that voters assign to individual candidates, and which quantify the level of appreciation of voters to individuals.} of voter $i$ depends solely on the final outcome of the $r$ ballots. Intuitively, voter $i$ considers committee $S$ as good, if for $i$'s important issues, i.e., issues from $\important(i)$, $S$ is likely to make decisions consistent with $i$'s preferences.
The ultimate satisfaction of a voter $i$ from committee $S$ is measured by value $\prob_{S, \sw}(i)$, the probability that $S$ makes a decision consistent with $i$'s preferences assuming that $S$ uses rule $\sw$ to make final decisions. Throughout the paper we will consider several specific variants of the voting committee model which will differ in a way the value $\prob_{S, \sw}(i)$ is defined.

Let us now define the central notion of this paper that we will use in further analysis.
\begin{definition}\label{def:utilOptimal}
Let $\sw$ be a decision rule used by the committee to make final decisions, and let $K$ denote the size of the committee to be elected. A committee $S$ is \emph{optimal} if:
\begin{align}
S \in \argmax_{S' \subseteq C: |S'| = K} \sum_i\prob_{S', \sw}(i) \textrm{.}
\end{align}
\end{definition}

In the above definition we take the utilitarian approach---the optimal rule aims at maximizing the sum of the ultimate satisfaction of the voters.
Analogously, we can define committees \emph{optimal in the egalitarian sense}, by replacing ``sum'' with ``min'' in Definition~\ref{def:utilOptimal}, optimizing the ultimate satisfaction of the least satisfied voter. For the sake of concreteness, in this paper we focus on the utilitarian case only. 
Definition~\ref{def:utilOptimal} implicitly introduces the way of comparing different committees, explored in this paper. A committee $S$ is preferred over a committee $S'$ if $\sum_i\prob_{S, \sw}(i) > \sum_i\prob_{S', \sw}(i)$.

\section{Overview of Election Rules}\label{sec:preliminaries}
In this section we recall definitions of several known decision and multi-winner election rules that we use in our further analysis.

\subsection{Decision Rules}

We recall that a (randomized) decision rule $\sw$ is a function $\sw: \naturals \to \reals$, that for each natural number $a$, corresponding to the number of votes casted on $\acc$ (accept), returns the probability that the value $\acc$ will be selected. We recall that we require $\sw$ to be symmetric, that is for each $a$, $0 \leq a \leq K$, it must hold that $\sw(a) = 1 - \sw(K-a)$.
Below we recall the definition of popular decision rules that we will use in our analysis.

The \emph{uniformly random dictatorship} rule selects an alternative $d$ with probability proportional to the number of committee members who vote for $d$.
The \emph{majority} rule deterministically selects $\acc$ if at least half of the committee members vote for $\acc$ (in order to avoid issues related to tie-breaking, we will always use the majority rule for the odd number of votes). Additionally, we consider one rule which is not symmetric with respect to decisions $\acc$ and $\rej$---the \emph{unanimity} rule returns deterministically $\acc$ if and only if all committee members vote for $\acc$.

\subsection{Multiwinner Election Rules}

For the description of multiwinner rules we assume that voters express their preferences over candidates by providing scores: for a voter $i$ and candidate $c$, by $u_{i, c}$ we denote the score that $i$ assigns to $c$. Intuitively, $u_{i, c}$ quantifies the level of appreciation of voter $i$ for candidate $c$.
In the description of our results in the subsequent sections we will discuss where these scores come from, and their specific structure. There are two particularly interesting types of scores:
\begin{description}
\item[Approval scores.] We say that scores are approvals if for each voter $i$ and each candidate $c$, we have $u_{i, c} \in \{0, 1\}$. We call the setting with approval scores the \emph{approval model}. In the approval model, we say that a voter $i$ approves of a candidate $c$ if $u_{i, c}=1$. Otherwise we say that $i$ disapproves of $c$.
\item[Borda scores.] In the model with Borda scores, for each voter $i$ and each candidate $c$, we have $u_{i, c} \in \{0, \ldots, m-1\}$, and $u_{i, c} \neq u_{i, c'}$ for each $c \neq c'$. In other words, each voter assigns his or her most preferred candidate score of $m-1$, to his or her second most preferred candidate score of $m-2$, etc. One way in which the Borda scores can be elicitated, is by asking voters for their rankings of candidates, and by applying Borda positional scoring functions to such ordinal preferences (cf.~the work of Young~\cite{you:j:scoring-functions}).
\end{description} 
A \emph{score profile} is a matrix of the scores of all voters over all candidates. We denote the set of all score profiles as $\mathscr{U}$.
For each $j \in \naturals$, by $\pow_j(S)$ we denote the set of all subsets of $S$ of size $j$. We refer to the elements of $\pow_K(C)$ as to \emph{committees}. A \emph{multi-winner election rule} is a function $\mw: \mathscr{U} \to \pow_K(C)$ that for a given score profile of voters returns a committee of size $K$.

A significant part of results provided in this paper concerns $\owa$ rules (OWA stands for an ordered weighted average operator). OWA rules in the context of approval model were first mentioned in the 19th century in the early works of Danish astronomer and mathematician Thorvald~N.~Thiele~\cite{Thie95a} and than generalized to arbitrary score profiles by Skowron~et~al.~\cite{owaWinner}.
Below, we recall the definition of this remarkably general class of rules and describe several concrete examples of OWA rules.

For each voter $i$, each committee $S$, and each number $j$, $1 \leq j \leq K$, let $u_{i, S}(j)$ denote the score of the $j$-th most preferred candidate from $S$, according to $i$. In other words: $\{u_{i, S}(1), \dots, u_{i, S}(K)\} = \{u_{i, c}: c \in S\}$, and $u_{i, S}(1) \geq \dots \geq u_{i, S}(K)$. For instance, $u_{i, S}(1)$ is the score that $i$ assigns to his or her most preferred candidate in $S$, $u_{i, S}(2)$ is the score that $i$ assigns to his or her second most preferred candidate in $S$, etc.
For each voter $i$, each committee $S$, and each $K$-element vector $\alpha = \langle \alpha_1, \ldots, \alpha_K \rangle$, we define the $\alpha$-satisfaction of $i$ from $S$ as the ordered weighted average (OWA) of the scores of the members of $S$: 
\begin{align*}
\alpha(i, S) = \sum_{j = 1}^K \alpha_j u_{i, S}(j) \textrm{.}
\end{align*}
The $\alpha$-rule selects a committee $S$ that maximizes total $\alpha$-satisfaction of the voters $\sum_i \alpha(i, S)$.

Intuitively, the $\alpha$ vector provides a way of aggregating scores of the individual members of the committee to obtain the score of the committee as the whole. Indeed, many known multiwinner rules are in fact OWA rules:
\begin{description}
\item[Top-$K$ rule and $K$-Borda rule.] If $\alpha_{\mathrm{Top}K} = \langle 1, 1, \dots, 1\rangle$ we get a top-$K$ rule that selects $K$ candidates with the highest total scores. Such rules are referred to as the weakly separable rules~\cite{elk-fal-sko-sli:c:multiwinner-rules}. For Borda scores, the top-$K$ rule collapses to the well known $K$-Borda rule~\cite{deb:j:k-borda}.

\item[Chamberlin--Courant rule.] Chamberlin--Courant rule~\cite{ccElection} is an OWA rule defined by the weight vector $\alpha_{\mathrm{CC}} = \langle 1, 0, \ldots, 0\rangle$. Informally speaking, according to Chamberlin and Courant a voter cares only about his or her most preferred candidate in the committee; such a most preferred candidate is a representative of the voter in the elected committee. Initially, Chamberlin and Courant defined their rule for Borda scores. Using Chamberlin--Courant rule in the approval model was first suggested by Thiele~\cite{Thie95a}. Elkind~et~al.~\cite{elk-fal-sko-sli:c:multiwinner-rules} considered the Chamberlin--Courant rule for different types of score profiles, and refer to these types of rules as representation focussed rules~\cite{elk-fal-sko-sli:c:multiwinner-rules}.

\item[Proportional Approval Voting (PAV).]  PAV is defined as an OWA rule with the weight vector $\alpha_{\mathrm{PAV}} = \langle 1, \nicefrac{1}{2}, \ldots, \nicefrac{1}{K} \rangle$. This rule, developed by Thiele, has been used for a short period in Sweden during early 1900's. This specific sequence of harmonic weights ensures a certain level of proportionality of the rule~\cite{pavVoting, justifiedRepresenattion}; as a matter of fact, PAV is often considered as an extension of the d'Hondt method of apportionment~\cite{Puke14a} to the case when voters can vote for individual candidates rather than for political parties.

\item[$k$-median rule.] The $k$-median rule is the OWA rule defined by the vector $\alpha$ which has $1$ on the $k$-th positions and $0$ on the others. In other words, according to the $k$-median rule, the satisfaction of a voter from a committee $S$ is his or her satisfaction from the $k$-th most preferred member of $S$.
\end{description}

The class of OWA-based rules captures many other interesting election systems such as top-$K$-counting rules~\cite{fal-sko-sli-tal:c:top-k-counting}, separable rules~\cite{elk-fal-sko-sli:c:multiwinner-rules} etc. For more discussion on OWA based rules and their applications beyond voting systems we refer the reader to the work of Skowron~et~al.~\cite{owaWinner}.

\medskip \noindent
\textbf{Sequential OWA Rules.} Unfortunately, for many OWA-based rules, finding the winners is a computationally hard problem\footnote{Computational hardness of the Chamberlin--Courant rule was first
proved by Procaccia et al.~\cite{complexityProportionalRepr}. Betzler~et~al.~\cite{fullyProportionalRepr} showed that this problem is also hard from the perspective of parameterized complexity theory. For arguments referring to other weight vectors we refer the reader to to the work of Skowron~et~al.~\cite{owaWinner}}. Sequential OWA rules form an appealing, computationally easy, alternative for OWA rules. Let $\alpha$ denote the vector of $K$ weights. The sequential $\alpha$-rule proceeds as follows. It starts with an empty solution $S = \emptyset$, and in each of the $K$ consecutive steps it adds to $S$ the candidate that increases the $\alpha$-satisfaction of the voters most. Usually, sequential OWA rules provide a good way of approximating their optimal counterparts~\cite{owaWinner}, sometimes exhibiting even more interesting properties than the original OWA rules~\cite{elk-fal-sko-sli:c:multiwinner-rules}.

\section{Correlated Voters in the Voting Committee Model}\label{sec:correlatedVoters}

In this section we consider voters which are not independent with respect to their preferences over the set of issues.  
We start our analysis by considering voters and candidates which can be represented as points on the line of preferences. The concept of representing voters and candidates as points in the Euclidean space dates back to 1966~\cite{DavisHinich66,Plott1967}, and since than, due to its many natural interpretations, it received a considerable amount of attention in the social choice literature~\cite{enelow1984spatial,enelow1990advances,mckelvey1990,merrill1999unified,schofield2007spatial}.

Let us consider the following illustrative example. Consider the population where one third of voters are left-wing and these voters are represented on the line by point 0, one third are right-wing and represented by point 1, and one third are centrists and represented by point $\nicefrac{1}{2}$. There are three issues $\mathsf{L}$, $\mathsf{C}$, and $\mathsf{R}$: the left-wing voters would like $\mathsf{L}$ accepted, $\mathsf{R}$ rejected, and they do not care about $\mathsf{C}$; the right-wing voters would like $\mathsf{R}$ accepted, $\mathsf{L}$ rejected, and they consider issue $\mathsf{C}$ insignificant; the centrists voters only care about $\mathsf{C}$ being accepted. A left-wing candidate will vote for accepting $\mathsf{L}$, $\mathsf{C}$ and $\mathsf{R}$ with probabilities 1, $\nicefrac{1}{2}$, and 0, respectively. A right-wing candidate will vote for accepting these issues with probabilities 0, $\nicefrac{1}{2}$, and 1, respectively. A centrist candidate will vote for accepting them with probabilities $\nicefrac{1}{2}$, 1, and  $\nicefrac{1}{2}$, respectively. In particular, left-wing voters are perfectly represented by left-wing candidates and perfectly misrepresented by right-wing candidates. Consider the two committees: $S$ which consists of a left-wing, a centrist, and a right-wing candidate, and $Q$ that consists of three centrist candidates.

\begin{figure*}[!t!h]
\begin{minipage}[h]{0.48\linewidth}
  \centering
  \begin{tikzpicture}
    \draw (0,0) -- (6,0);
    \filldraw [gray] (0.0, 0) circle (4pt);
    \filldraw [gray] (3.0, 0) circle (4pt);
    \filldraw [gray] (6.0, 0) circle (4pt);
  \end{tikzpicture}
  \caption*{Proportional committee $S$}
\end{minipage}
\begin{minipage}[h]{0.48\linewidth}
  \centering
  \begin{tikzpicture}
    \draw (0,0) -- (6,0);
    \filldraw [gray] (2.8, 0) circle (4pt);
    \filldraw [gray] (3.0, 0) circle (4pt);
    \filldraw [gray] (3.2, 0) circle (4pt);
  \end{tikzpicture}
  \caption*{Centrist committee $Q$}
\end{minipage}
\end{figure*}

The probability that committee $S$ makes decision consistent with preferences of left-wing voters is equal to $\nicefrac{1}{2}$. The same probability for right-wing voters is also equal to $\nicefrac{1}{2}$, and for the centrists voters to $\nicefrac{3}{4}$ (the probability that at least one of the two committee members, the left-wing or the right-wing, will vote for accepting issue $\mathsf{C}$). The ultimate satisfaction of the voters from committee $S$ is thus equal to $\nicefrac{n}{3} \cdot (\nicefrac{1}{2} + \nicefrac{1}{2} + \nicefrac{3}{4})$. The centrist committee $Q$, however, will make decision consistent with left-wing, centrist and right-wing voters with probabilities equal to $\nicefrac{1}{2}$, 1, and $\nicefrac{1}{2}$, respectively. Thus, committee $Q$ results in a higher ultimate satisfaction of the voters. The surprising conclusion of the above example is that the centrist committee, in some situations, can represent the voters better than the proportional one. In particular, this example shows that it is not clear how the best decision-making committee should be aligned on the preference line. Below, we will explore this idea further, we will argue that the described phenomenon is specific to our example, and that in the considered cases the proportional committee is usually, in some sense, superior.  

\subsection{Uniform Distribution on the Euclidean Line}\label{sec:prefOnLine}

Let us first consider the case where voters and candidates are uniformly distributed on the $[0, 1]$ interval. For this distribution with $n = 500$ voters and $m = 500$ candidates,
we run the following computer simulations:
\begin{enumerate}
\item For each voter $v$ we assumed that there exists one issue which can be represented on the line by the same point as the corresponding voter.
\item For each issue $i$ we uniformly at random selected a value $p \in [\nicefrac{3}{2}, \nicefrac{5}{2}]$. Intuitively, values of $p$ closer to $\nicefrac{3}{2}$ make the issue more likely to be preferred by  a majority of voters, thus such an issue can be considered as objectively desired. Values of $p$ closer to $\nicefrac{5}{2}$ suggest that the issue is likely to be preferred only by a minority of voters. Specifically, for each voter and for each candidate we randomly selected their preference over the issue. For an issue $i$ and a voter (or a candidate) $x$ we took $1 - p \cdot |i - x|$ as the probability of $x$ accepting $i$, where $|i-x|$ is the distance between points corresponding to $i$ and $x$ on the line. In particular, each voter always accepts his or her corresponding issue; further, for $p = 2$ and a centrist issue $i$, in expectation half of the voters (and half of the candidates) accept $i$.
\item For each issue we computed decisions made by different committees with respect to such issues, and we assigned to each voter $v$ the satisfaction equal to the fraction of issues for each a given committee made decisions consistent with $v$'s preferences.
\item We repeated such experiment 500 times and for each voter we computed the average ultimate satisfaction.  
\end{enumerate}
In these simulations we considered five different committees. Three of the considered committees consisted of 51 candidates and were selected by the top-$K$ rule, and by the sequential variants of PAV, and Chamberlin--Courant rules. To compute committees according to these rules, we assumed that the scores that voters assign to candidates are proportional to one minus the distance between the respective points. For a better intuition, below we depict example committees returned by the three considered multiwinner election rules (for the sake of readability of the diagram, below we depict the case of 10 rather than 51 committee members).

\begin{figure*}[!t!h]
\begin{minipage}[h]{0.33\linewidth}
  \centering
  \begin{tikzpicture}
    \draw (0,0) -- (3.5,0);
    \filldraw [gray] (0.1547, 0) circle (2pt);
    \filldraw [gray] (0.5663, 0) circle (2pt);
    \filldraw [gray] (0.8848, 0) circle (2pt);
    \filldraw [gray] (1.1851, 0) circle (2pt);
    \filldraw [gray] (1.5477, 0) circle (2pt);
    \filldraw [gray] (1.8487, 0) circle (2pt);
    \filldraw [gray] (2.1938, 0) circle (2pt);
    \filldraw [gray] (2.4864, 0) circle (2pt);
    \filldraw [gray] (2.8441, 0) circle (2pt);
    \filldraw [gray] (3.2543, 0) circle (2pt);
  \end{tikzpicture}
  \caption*{Chamberlin--Courant}
\end{minipage}
\begin{minipage}[h]{0.33\linewidth}
  \centering
  \begin{tikzpicture}
    \draw (0,0) -- (3.5,0);
    \filldraw [gray] (0.5901, 0) circle (2pt);
    \filldraw [gray] (0.8505, 0) circle (2pt);
    \filldraw [gray] (1.1501, 0) circle (2pt);
    \filldraw [gray] (1.3874, 0) circle (2pt);
    \filldraw [gray] (1.6282, 0) circle (2pt);
    \filldraw [gray] (1.8466, 0) circle (2pt);
    \filldraw [gray] (2.1049, 0) circle (2pt);
    \filldraw [gray] (2.3506, 0) circle (2pt);
    \filldraw [gray] (2.618, 0) circle (2pt);
    \filldraw [gray] (2.8413, 0) circle (2pt);
  \end{tikzpicture}
  \caption*{PAV}
\end{minipage}
\begin{minipage}[h]{0.33\linewidth}
  \centering
  \begin{tikzpicture}
    \draw (0,0) -- (3.5,0);
    \filldraw [gray] (1.3734, 0) circle (2pt);
    \filldraw [gray] (1.4434, 0) circle (2pt);
    \filldraw [gray] (1.5134, 0) circle (2pt);
    \filldraw [gray] (1.5834, 0) circle (2pt);
    \filldraw [gray] (1.6534, 0) circle (2pt);
    \filldraw [gray] (1.7234, 0) circle (2pt);
    \filldraw [gray] (1.7934, 0) circle (2pt);
    \filldraw [gray] (1.8634, 0) circle (2pt);
    \filldraw [gray] (1.9334, 0) circle (2pt);
    \filldraw [gray] (2.0034, 0) circle (2pt);
  \end{tikzpicture}
  \caption*{Top-$K$}
\end{minipage}
\end{figure*}
\noindent
Further, we considered one committee that consisted of a single centrist candidate, and one committee that consisted of all voters, i.e., the committee which represents the direct democracy---the case where all decisions are made in referenda.
Finally, we considered two different decision rules that the elected committees used to make final decisions with respect to issues (cf. point 3 above), namely the majority rule and the random dictatorship rule. 

\begin{figure*}[!t!h]
\begin{minipage}[h]{0.48\linewidth}
  \centering
  \includegraphics[width=\textwidth]{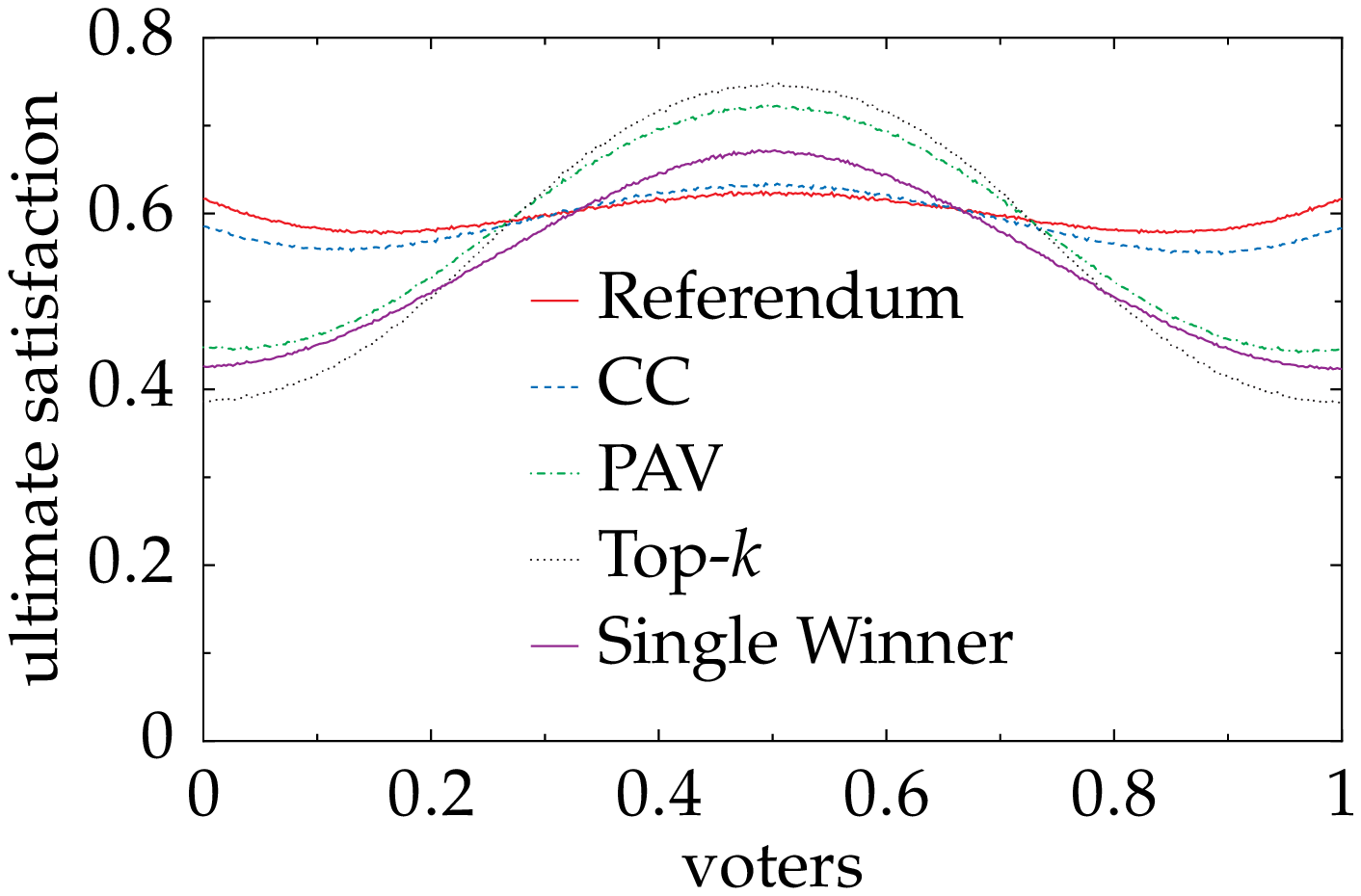}
  \caption*{(a) majority}
\end{minipage}
\hspace{0.3cm}
\begin{minipage}[h]{0.48\linewidth}
  \centering
  \includegraphics[width=\textwidth]{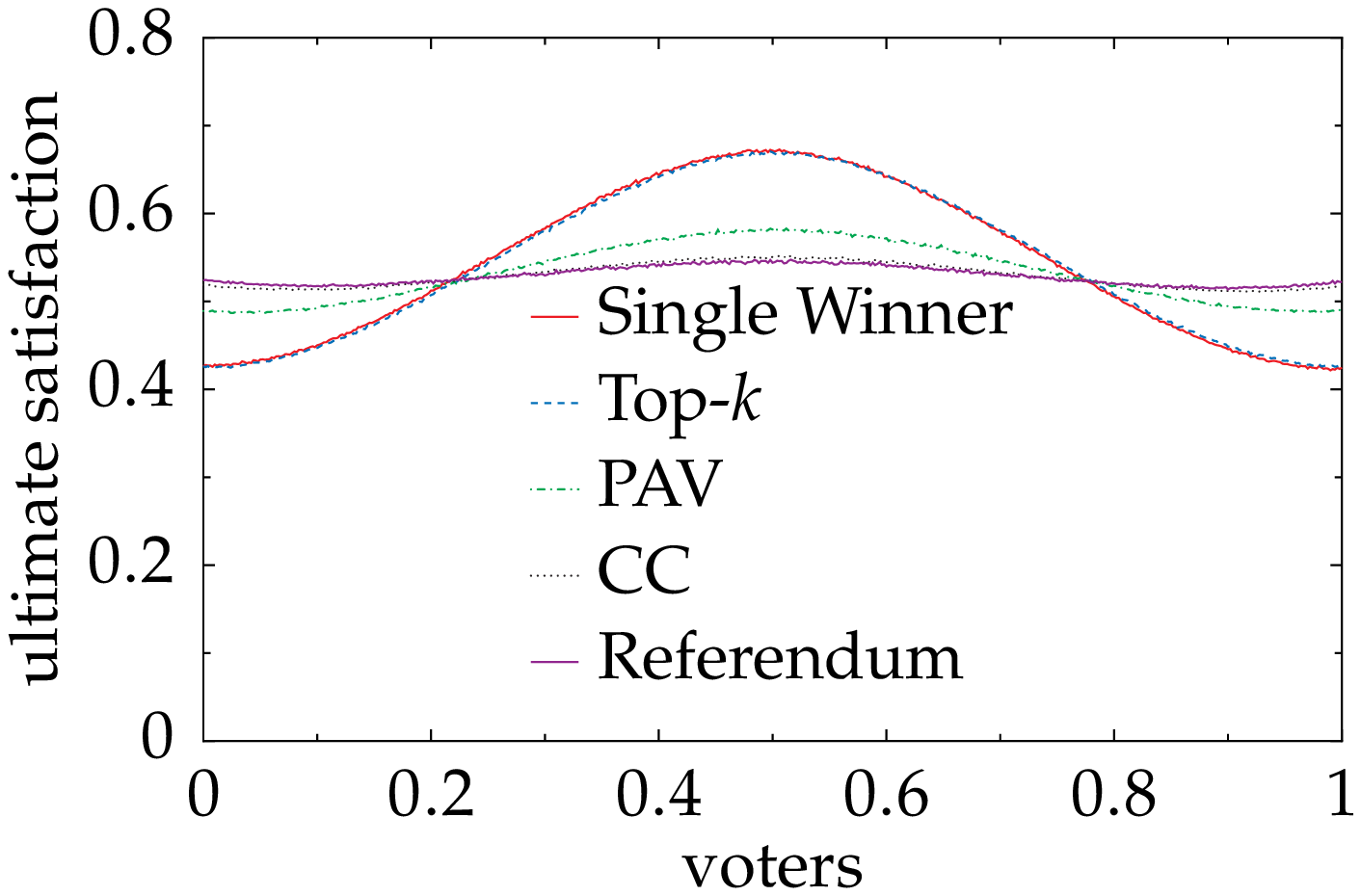}
  \caption*{(b) random dictatorship}
\end{minipage}

\caption{The results of the simulations for voters and candidates uniformly distributed on the Euclidean interval $[0, 1]$. All issues are important. Points on the $x$-axis correspond to voters.}
\label{fig:simulations}
\end{figure*}

The results of our simulations are depicted in Figure~\ref{fig:simulations}. Our conclusions are the following:
\begin{enumerate}
\item For the case of majority rule, the committee elected by PAV strictly dominates the single-winner committee. This justifies the decision of selecting a collective body rather than entrusting the whole power to a single representative. Further, this observation suggest a conjecture that an analogous variant of Condorcet's jury theorem can be formulated for the voting committee model. We recall that the main difference is that in our model there is no absolute criterion suggesting that some voter is right or wrong with respect to a given issue. Even for a high value of parameter $p$ (cf. point 2 in the description of simulations), where the issue is relatively unpopular, a voter who is close to such issue will still prefer it to be accepted.       
\item The areas below the plots for all four multiwinner committees are almost the same. Yet, the committee selected by the Chamberlin--Courant rule ensures more fairness to the voters; the two remaining multiwinner rules tend to favor centrist voters. This observation suggests the conjecture that the distribution of the members of the elected committee in a spatial model should resemble the distribution of the voters.
\item Decisions made by the majority rule are in expectation better than decisions made by the random dictatorship rule. 
\end{enumerate}

Next, we wanted to check how the intensities of voters' preferences are reflected in the voting committee model. Intuitively, a voter has stronger feelings about issues which are either very close to him or her, or which are far away on the line of preferences. A voter cares more about decisions made with respect to such issues in comparison to issues for which he or she is relatively indifferent. 
We modeled intensities of voters' preferences by introducing insignificant issues to our simulations. For an issue $i$ and voter $v$ we assumed that $i$ is insignificant for $v$ if $p \cdot |i - x| \in [\nicefrac{2}{5}, \nicefrac{3}{5}]$---intuitively, in such case $i$ is insignificant because the probabilities of $v$ willing to accept and reject the issue are relatively similar.
The results of simulations with this modification showed almost the same shapes as in Figure~\ref{fig:simulations}. These results confirm our previous conclusions. Even more, in this case the area below the plot for the Chamberlin--Courant committee was by $4\%$ larger than for the top-$K$ committee. This suggests that the Chamberlin--Courant committee is not only more fair to the voters, but it is also superior to the centrist committee with respect to the total satisfaction of voters.

\subsection{Simulations With Datasets Describing Real-Life Preferences}

As the next step we run simulations for numerous datasets describing people's preferences. To this end we used PrefLib~\cite{conf/aldt/MatteiW13}, a reference library of preference data. PrefLib contains over 300 datasets describing different scenarios where voters have (weak) preferences over candidates. We filtered out datasets with less than 15 voters and with less than 20 candidates. For each of the remaining datasets we run the following procedure:
\begin{enumerate} 
\item For each voter $v$ we introduced one issue $i_v$.
\item For an issue $i_v$ and each voter $v'$ we computed the Kendal-Tau distance~\cite{kendall1938measure} between rankings $v'$ and $v$. We scaled these distances so that the average distance between $i_v$ and all voters was equal to $\nicefrac{1}{2}$. For each candidate $c$ we defined the distance between $i_v$ and $c$ as the position of $c$ in $v$'s preference ranking. We also scaled these distances so that the average distance between $i_v$ and all candidates was equal to $\nicefrac{1}{2}$.
\item We selected a value $p$, uniformly at random from the interval $[0, 3]$. For each issue $i$ and for each voter (respectively, each candidate) $x$ we randomly selected their preference over the issue. We took $1 - p \cdot |i - x|$ as the probability of $x$ accepting $i$, where $|i - x|$ denotes the scaled distance between the voter (the candidate) and the issue, defined in the previous point.
\item Similarly as in Section~\ref{sec:prefOnLine}, we computed decisions made by different committees with respect to different issues, and for each voter $v$ we computed the average fraction of issues for each a given committee made decisions consistent with $v$'s preferences.
\item We repeated such experiment 10 times and for each voter we computed the average ultimate satisfaction.  
\end{enumerate}

\begin{figure*}[!t!h]
\begin{minipage}[h]{0.48\linewidth}
  \centering
  \includegraphics[width=\textwidth]{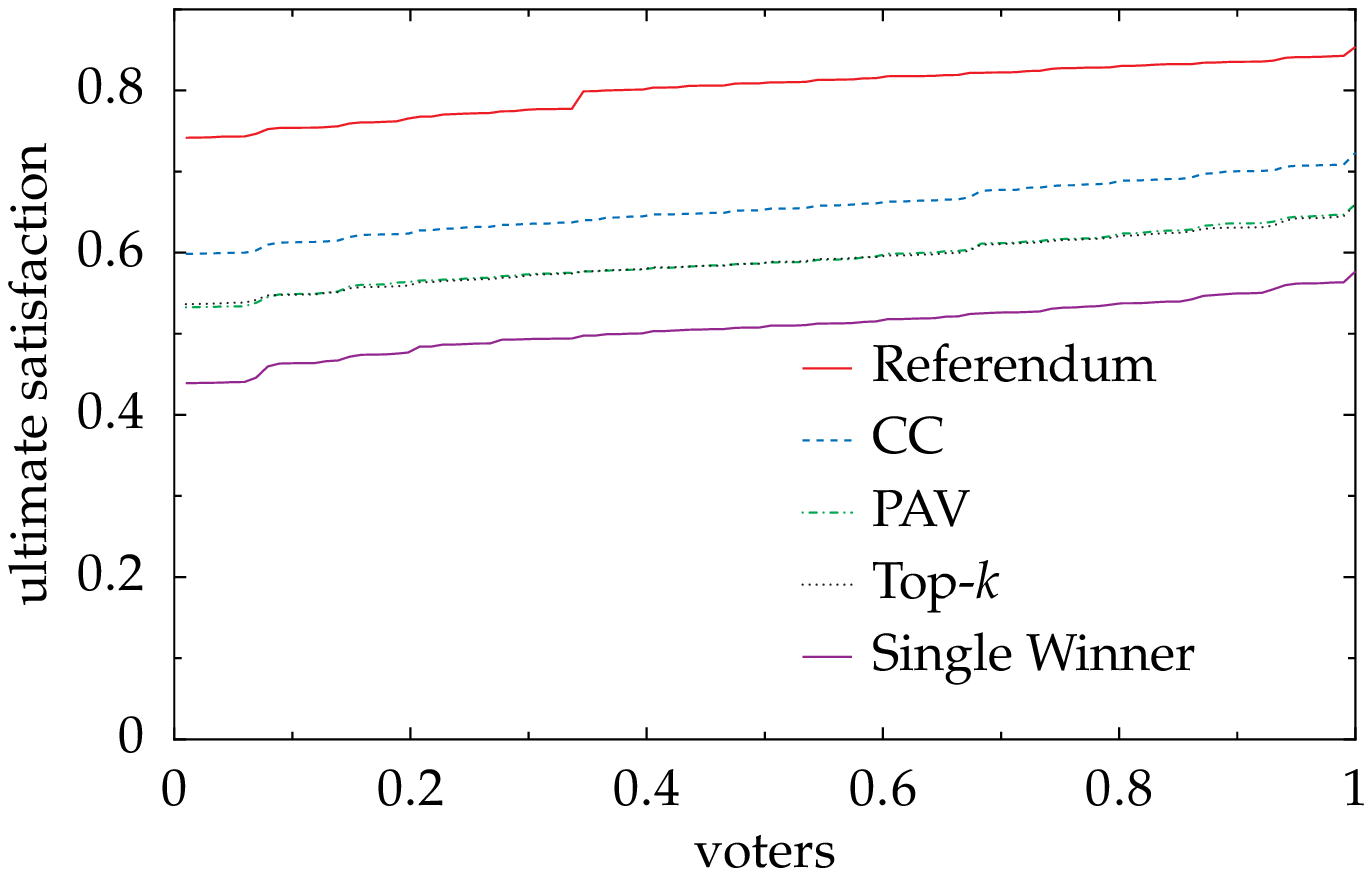}
  \caption*{(a) majority}
\end{minipage}
\hspace{0.3cm}
\begin{minipage}[h]{0.48\linewidth}
  \centering
  \includegraphics[width=\textwidth]{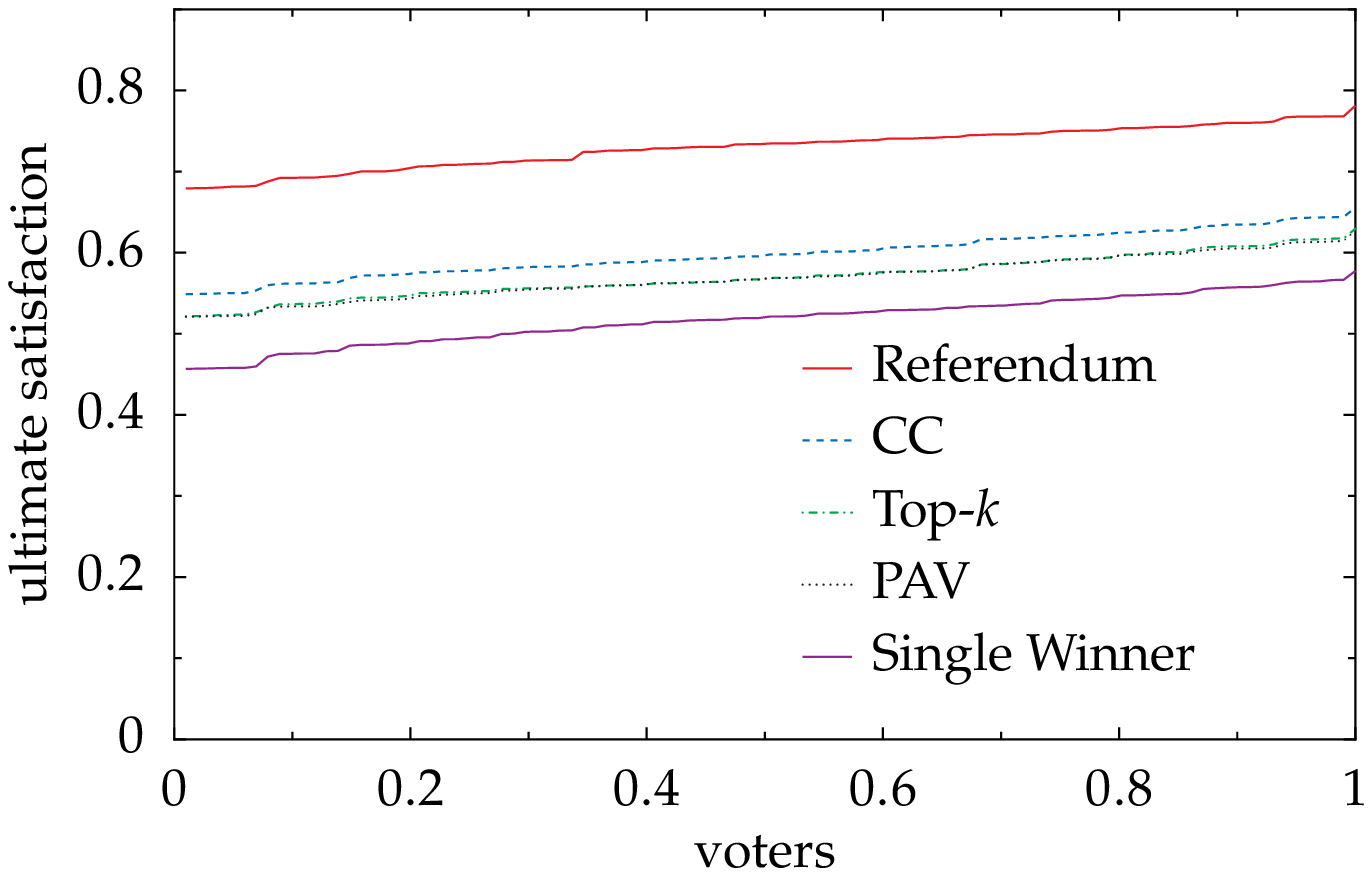}
  \caption*{(b) random dictatorship}
\end{minipage}

\caption{The results of the simulations for PrefLib~\cite{conf/aldt/MatteiW13}.}
\label{fig:real_data}
\end{figure*}

To compute the five winning committees, we took scores that voters assign to candidates, which were extracted from their ordinal preferences by using the Borda positional scoring function.

The results of these experiments are depicted in Figure~\ref{fig:real_data}. Since in this case the voters are not represented by points on the line, in Figure~\ref{fig:real_data} we sorted the voters according to the their ultimate satisfactions. For instance a point on the blue line with $x$- and $y$- coordinates equal to $0.8$ and $0.7$ respectively, denotes that 80\% of voters had ultimate satisfaction below $0.7$, and 20\% of voters above $0.7$. 

The results of these simulations confirm our conclusions from Section~\ref{sec:prefOnLine}. One significant difference is that now, we can objectively claim the Chamberlin--Courant rule to be superior, not only with respect to the fairness, but also with respect to the total ultimate satisfaction of the voters.

\subsection{Single Issue \& Deterministic Case: Theoretical Results}\label{sec:deterministicModel}

In this subsection we consider scenarios where each voter has a single important issue that is known in advance, and where voters know preferences of the candidates over the issues. Thus, when electing a committee, it is most natural to consider approval scores of the voters over candidates---a voter $i$ approves of a candidate $c$ only if the preference of $c$ over the $i$'s important issue is consistent with the preference of the voter. We recall that if a voter $i$ approves of a candidate $c$, we say that it assigns to $c$ score equal to $1$, and denote it by $u_{i, c} = 1$. Otherwise, we say that he or she assigns to $c$ score of $0$, and denote it by $u_{i, c} = 0$. These scores form an input for the multiwinner election rule $\mw$, and thus they can be viewed as the only information that rule $\mw$ can use to select a committee of $K$ representatives.

We recall that for each committee $S$, and for each voter $i$, by $\prob_{S, \sw}(i)$ we denote the probability that $S$ makes a decision consistent with $i$'s preferences, and that we consider $\prob_{S, \sw}(i)$ the ultimate satisfaction of voter $i$ from committee $S$. Since we consider the model with approval scores, it holds that:
\begin{align*}
\textstyle \prob_{S, \sw}(i) = \sw\Big(\sum_{c \in S} u_{i, c}\Big) \textrm{.}
\end{align*}

In our subsequent discussion we will show that several known multi-winner election rules can be viewed as optimal in the deterministic variant of our voting committee model. Each such a rule is optimal for different decision system $\sw$, used by the selected committee to make final decisions. Thus, our results give intuition regarding the applicability of different multi-winner election rules, depending on for what kind of decision making the committee is elected for.

\begin{definition}
Let $\sw$ be a decision rule used in the second stage of the election model. A multi-winner election rule $\mw$ is optimal for $\sw$ if for each preferences of voters it elects an optimal committee.
\end{definition}

Below, we explain which multiwinner rules are optimal for majority, random dictatorship, and unanimity decision systems.

\begin{theorem}\label{thm:majority}
Assume $K$ is odd. In the deterministic model:
\begin{inparaenum}[(i)]
\item the $\frac{K+1}{2}$-median election system is optimal for the majority rule, and
\item the top-$K$ rule is optimal for the random dictatorship rule.
\end{inparaenum}
\end{theorem}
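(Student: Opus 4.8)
The plan is to reduce both claims to a term-by-term comparison of objective functions, using the formula $\prob_{S, \sw}(i) = \sw\big(\sum_{c \in S} u_{i, c}\big)$ recalled just before the theorem; write $a_i(S) = \sum_{c \in S} u_{i, c}$ for the number of members of $S$ that voter $i$ approves of. The first step is to read off the two decision rules in terms of this quantity. For odd $K$ the majority rule is $\sw_{\mathrm{maj}}(a) = \indicator[a \geq \tfrac{K+1}{2}]$, and the uniformly random dictatorship is $\sw_{\mathrm{rd}}(a) = \tfrac{a}{K}$; both are immediately seen to satisfy the symmetry condition $\sw(a) = 1 - \sw(K-a)$. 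Since voter $i$ approves of a candidate $c$ exactly when $c$ shares $i$'s view on $i$'s important issue, on that issue precisely $a_i(S)$ committee members vote the way $i$ wants, so $\prob_{S, \sw_{\mathrm{maj}}}(i) = \indicator[a_i(S) \geq \tfrac{K+1}{2}]$ and $\prob_{S, \sw_{\mathrm{rd}}}(i) = \tfrac{a_i(S)}{K}$.

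For part (i), the key identity is that in the approval model the score of the $j$-th most preferred member of $S$ satisfies $u_{i, S}(j) = \indicator[a_i(S) \geq j]$: the $j$-th largest value among the $0/1$ scores $\{u_{i,c} : c \in S\}$ is $1$ iff at least $j$ of them are $1$. Taking $j = \tfrac{K+1}{2}$ gives $u_{i, S}\big(\tfrac{K+1}{2}\big) = \indicator[a_i(S) \geq \tfrac{K+1}{2}] = \prob_{S, \sw_{\mathrm{maj}}}(i)$. Hence $\sum_i u_{i,S}\big(\tfrac{K+1}{2}\big)$, the total $\tfrac{K+1}{2}$-median satisfaction that the $\tfrac{K+1}{2}$-median $\owa$ rule maximizes, coincides with $\sum_i \prob_{S, \sw_{\mathrm{maj}}}(i)$ for every committee $S$, so the two have the same set of maximizers; that is exactly the statement that the $\tfrac{K+1}{2}$-median rule is optimal for the majority rule.

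For part (ii), I would use linearity to swap the order of summation: $\sum_i \prob_{S, \sw_{\mathrm{rd}}}(i) = \tfrac{1}{K}\sum_i a_i(S) = \tfrac{1}{K}\sum_{c \in S}\sum_i u_{i,c}$, which is $\tfrac{1}{K}$ times the sum of the column totals $\sum_i u_{i,c}$ over $c \in S$. This is maximized precisely by choosing the $K$ candidates with the largest total scores, i.e.\ by the top-$K$ rule (the $\owa$ rule with weight vector $\langle 1, \dots, 1\rangle$). As in part (i), equality of objectives for every $S$ yields equality of the $\argmax$ sets, which is what optimality requires.

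I do not expect a real obstacle here; the points needing care are only (a) writing down the closed forms of $\sw_{\mathrm{maj}}$ and $\sw_{\mathrm{rd}}$ and checking symmetry, (b) the approval-model identity $u_{i, S}(j) = \indicator[a_i(S) \geq j]$ on which part (i) rests, and (c) phrasing the conclusion in terms of $\argmax$ sets so that ties in either the optimal-committee notion or the multiwinner rule are handled uniformly.
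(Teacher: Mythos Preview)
Your proposal is correct and follows essentially the same argument as the paper: both compute $\prob_{S,\sw}(i)$ as the indicator $\indicator[a_i(S)\ge \tfrac{K+1}{2}]$ for majority and as $\tfrac{a_i(S)}{K}$ for random dictatorship, then identify these with the $\tfrac{K+1}{2}$-median and top-$K$ objectives, respectively. Your write-up is slightly more explicit (the identity $u_{i,S}(j)=\indicator[a_i(S)\ge j]$ and the symmetry checks), but the substance is the same.
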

\begin{proof}
We start by considering the case when $\sw$ is the majority rule.
We calculate $\prob_{S, \sw}(i)$, the ultimate satisfaction of a voter $i$ from a committee $S$, assuming $S$ uses majority rule to make final decisions. A committee member $c \in C$ votes according to $i$'s preferences if and only if $c$ is approved by $i$. Thus, a committee $S$ makes decisions consistent with $i$'s preferences, if it contains at least $\frac{K+1}{2}$ members approved by $i$.
The satisfaction $\prob_{S, \sw}(i)$ of $i$ from $S$ is equal to $1$ if $S$ contains at least $\frac{K+1}{2}$ members approved by $i$, otherwise it is equal to $0$.

The same formula defines satisfaction of $i$ from $S$ in the $\frac{K+1}{2}$-median election system. Finally, we note that in the $\frac{K+1}{2}$-median election system, the committee that maximizes voters' total satisfaction is selected.

Let us now move to the case when $\sw$ is the random dictatorship rule.
For a committee $S$, let $\mathrm{apprv}_i(S)$ denote the number of candidates from $S$ that are approved of by $i$. For each committee $S$, and each member $c \in S$ the probability that during the uniformly random dictatorship ballot regarding the issue, the final decision will be made by the committee member $c$ is equal to $\nicefrac{1}{K}$. The probability that $S$ will make decision according to $i$'s preferences is, thus, equal to:
\begin{align*}
\prob_{S, \sw}(i) &= \sum_{c \in S} \frac{1}{K} \cdot \indicator_{\text{$i$ approves of $c$}} = \frac{\mathrm{apprv}_i(S)}{K} \textrm{.}
\end{align*}
Consequently, a committee $S$ is optimal if it maximizes the value of the formula $\sum_i\mathrm{apprv}_i(S)$. Exactly such committee is elected by the top-$K$ rule.
\end{proof}

Theorem~\ref{thm:majority} is quite powerful in a sense that it claims the optimality of the $\frac{K+1}{2}$-median and top-$K$ multi-winner rules, for the corresponding decision rules, irrespectively of the preferences of the voters over the issues. Unfortunately, this is not always the case, which is illustrated in the following example.

\begin{example}\label{ex:unanimityNoOptimal}
Consider an election with two voters, three candidates, and where the goal is to select a committee with two candidates ($K = 2$). Assume the elected committee will use the unanimity rule to make decisions on issues. Further, assume the deterministic model with two issues, $D_1$ and $D_2$; issue $D_1$ is important only for voter $i_1$ and issue $D_2$ only for voter $i_2$. Consider preferences of the voters and of the candidates over issues given in the left table below.

\begin{minipage}{0.48\textwidth}
\centering
        \begin{tabular}{c | c c c c c} 
        \hline
        & $i_1$ & $i_2$ & $c_1$ & $c_2$ & $c_3$ \\
        \hline\hline
        $D_1$ & \emph{\textbf{R}} & & \emph{R} & \emph{R} & \emph{A} \\ 
        \hline
        $D_2$ & & \emph{\textbf{R}} & \emph{A} & \emph{A} & \emph{R} \\
        \hline
        \end{tabular}
\end{minipage}
\begin{minipage}{0.48\textwidth}
\centering
        \begin{tabular}{c | c c c c c} 
        \hline
        & $i_1$ & $i_2$ & $c_1$ & $c_2$ & $c_3$ \\
        \hline\hline
        $D_1$ & \emph{\textbf{A}} & & \emph{A} & \emph{A} & \emph{R} \\ 
        \hline
        $D_2$ & & \emph{\textbf{A}} & \emph{R} & \emph{R} & \emph{A} \\
        \hline
        \end{tabular}
\end{minipage}
\medskip

\noindent
Observe that voter $i_1$ approves of candidates $c_1$ and $c_2$ and that voter $i_2$ approves of candidate $c_3$. In this case the optimal committee would be $\{c_1, c_3\}$ or $\{c_2, c_3\}$. Both committees will reject both issues, and so they will satisfy both voters. Now, consider the preferences from the right table. Similarly as before, in this case voter $i_1$ approves of candidates $c_1$ and $c_2$ and voter $i_2$ approves of candidate $c_3$. However, in this case the optimal committee is $\{c_1, c_2\}$. This committee will accept issue $D_1$, hence satisfy one voter. Any other committee will reject both issues satisfying no voter.
\end{example}

Example~\ref{ex:unanimityNoOptimal} suggests that for the unanimity rule there exists no optimal multiwinner election method. This is an artifact of the fact that unanimity is not symmetric with respect to decisions $\acc$ and $\rej$. For the unanimity rule we can obtain a much weaker claim, yet our claim will also have an interesting interpretation. Before we proceed further, we introduce two new definitions that describe two extreme classes of voters' preferences over the issues. We say that voters are \emph{rejection-oriented} if for each voter $i$, $d_i = \rej$, meaning that each voter gets satisfaction only from rejecting issues. Analogously, we say that the voters are \emph{acceptance-oriented} if for each voter $i$, $d_i = \acc$.

\begin{proposition}\label{thm:chambCourVeto}
For the unanimity system in the deterministic model:
\begin{inparaenum}[(i)]
\item the Chamberlin--Courant system with approval votes is optimal for rejection-oriented voters, and
\item the $K$-median system with approval votes is optimal for acceptance-oriented voters.
\end{inparaenum}
\end{proposition}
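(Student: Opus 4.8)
The plan is to follow the template of the proof of Theorem~\ref{thm:majority}: for each of the two restricted classes of electorates I would compute $\prob_{S,\sw}(i)$ explicitly as a function of $\mathrm{apprv}_i(S)$, the number of members of $S$ that $i$ approves of, recognise the resulting expression as the $\alpha$-satisfaction $\alpha(i,S)$ for the weight vector defining the rule in question, and then invoke the definition of an OWA rule as the one maximising $\sum_i \alpha(i,S)$ to conclude optimality.

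First I would unwind the unanimity rule: it outputs $\acc$ precisely when every committee member votes $\acc$, and outputs $\rej$ otherwise; since it is not symmetric with respect to $\acc$ and $\rej$ (the source of the non-existence phenomenon in Example~\ref{ex:unanimityNoOptimal}), I would compute $\prob_{S,\sw}(i)$ directly rather than through a closed form in $\sum_{c\in S}u_{i,c}$. As in the proof of Theorem~\ref{thm:majority}, a committee member $c$ votes on $i$'s important issue in agreement with $i$ if and only if $i$ approves of $c$. Hence, for rejection-oriented voters ($d_i=\rej$) the members voting $\rej$ are exactly those approved by $i$, so the committee outputs $\acc$ iff $\mathrm{apprv}_i(S)=0$; since $i$ wants $\rej$, this gives $\prob_{S,\sw}(i)=\indicator_{\mathrm{apprv}_i(S)\ge 1}$, which in the approval model equals $u_{i,S}(1)$, i.e.\ the $\alpha_{\mathrm{CC}}$-satisfaction of $i$ from $S$. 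Therefore $\sum_i\prob_{S,\sw}(i)=\sum_i\alpha_{\mathrm{CC}}(i,S)$ and the Chamberlin--Courant committee is optimal. Symmetrically, for acceptance-oriented voters ($d_i=\acc$) the members voting $\acc$ are exactly those approved by $i$, so the committee outputs $\acc$ — the decision $i$ wants — iff $\mathrm{apprv}_i(S)=K$, giving $\prob_{S,\sw}(i)=\indicator_{\mathrm{apprv}_i(S)=K}=u_{i,S}(K)$, the $\alpha$-satisfaction for the $K$-median weight vector, so the $K$-median rule is optimal.

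The mathematical content is light; the one place to be careful is the step translating ``the committee's unanimous decision agrees with $i$'' into a condition on $\mathrm{apprv}_i(S)$, since the inequality flips between the two orientations — a single approved member already blocks an unwanted $\acc$ when voters are rejection-oriented, whereas all $K$ members must be approved to force a wanted $\acc$ when voters are acceptance-oriented. Once this is pinned down, identifying $\indicator_{\mathrm{apprv}_i(S)\ge1}$ with $u_{i,S}(1)$ and $\indicator_{\mathrm{apprv}_i(S)=K}$ with $u_{i,S}(K)$ in the approval model is immediate, and optimality is then just the definition of the two OWA rules.
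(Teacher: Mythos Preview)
Your proposal is correct and follows exactly the approach of the paper: compute $\prob_{S,\sw}(i)$ as an indicator in $\mathrm{apprv}_i(S)$ and identify it with the $\alpha$-satisfaction of the corresponding OWA rule. The paper's proof is in fact terser than yours---it handles only the rejection-oriented case and declares the other analogous---so your write-up, which spells out both directions and the asymmetry between them, is if anything more complete.
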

\begin{proof}
We provide the proof for rejection-oriented voters. The proof for acceptance-oriented voters is analogous.
Our reasoning is very similar to the proof of Theorem~\ref{thm:majority}. The ultimate satisfaction of a voter $i$ from a committee $S$, $\prob_{S, \sw}(i)$, is equal to 1 if $S$ contains at least one candidate approved by $i$, and is equal to $0$ otherwise. This is equivalent to the definition of voters' satisfaction in the Chamberlin--Courant election rule, which completes the proof.
\end{proof}


Proposition~\ref{thm:chambCourVeto} suggest that Chamberlin--Courant rule and $K$-median rule are suitable for electing committees that have veto rights. One should choose one of them depending on the voters' satisfaction model. For instance, if passing a wrong decision has much more severe consequences than rejecting a good one (which is captured by modeling the voters as rejection-oriented), a committee should be selected with Chamberlin--Courant rule, and it should use unanimity rule to make final decisions. This is very often the case when there exists a well established status quo which should be changed only in indisputable cases. 
On the other hand, if passing a wrong decision has relatively low cost compared to rejecting a good one, the committee should be selected with $K$-median rule.

\section{Independent Voters: Theoretical Analysis}

In this section we consider voters whose preferences over issues can be viewed as independent random variables. We will show that if the voters are independent, there exist optimal rules, also in the nondeterministic case. In what it follows, we relax two assumptions from Section~\ref{sec:deterministicModel}. First, we assume that each voter can consider multiple issues as important. Second, we assume that the voters do not know how the candidates are going to vote. Nevertheless, they are provided with some form of intuition which allows them to recognize that some candidates would better represent them in the elected committee than others. To capture this intuition, for each voter $i$ and each candidate $c$ we define the \emph{probability of representation of $i$ by $c$}, denoted by $p_{i, c}$. This is the probability that $c$ will vote according to $i$'s preferences on an $i$'s important issue. We write $q_{i, c} = (1 - p_{i, c})$.
Clearly, value $p_{i, c}$ measures how well $i$ feels represented by $c$, thus we say that voter $i$ assigns score $u_{i, c} = p_{i, c}$ to candidate $c$. These scores are given as an input to the multiwinner election rule $\mw$.


Further, for each voter $i$, each important for $i$ issue $D_j \in \important(i)$, each committee $S$, and each committee vote $v \in D_j^K$, let $\prob_{S}(v)$ denote the probability that members of $S$ cast vote $v$:
\begin{align*}
\prob_{S}(v) = \prod_{c \in S} \left(\indicator_{v[j] = d_i^j}p_{i, c} + \indicator_{v[j] \neq d_i^j}q_{i, c}\right) \textrm{.}
\end{align*}
Recall that $\sw: \naturals \to \reals$ is a randomized decision rule used by the committee to make decisions over issues. By $\prob_{\sw}(i | v)$ we denote the probability that the rule $\sw$, given vote $v$, makes decision $d_j^i$, i.e., decision consistent with $i$'s preferences:
\begin{align*}
\prob_{\sw}(i | v) &= \indicator_{d_i^j = \acc} \sw\Big(\sum_{d \in v} \indicator_{d = \acc}\Big) + \indicator_{d_i^j = \rej} \Big(1 - \sw\Big(\sum_{d \in v} \indicator_{d = \acc}\Big)\Big) \\
                   &= \indicator_{d_i^j = \acc} \sw\Big(\sum_{d \in v} \indicator_{d = \acc}\Big) + \indicator_{d_i^j = \rej} \Big(\sw\Big(\sum_{d \in v} \indicator_{d = \rej}\Big)\Big) \\
                   &= \sw\Big(\sum_{d \in v} \indicator_{d = d_{i}^j}\Big) \textrm{.}
\end{align*}
We recall that for each committee $S$, each voter $i$, and each important issue $D_j \in \important(i)$,
by $\prob_{S, \sw}(i)$ we denote the probability that $S$ makes decision consistent with $i$'s preferences. Thus:
\begin{align*}
\prob_{S, \sw}(i) = \sum_{v \in D_j^K} \prob_{S}(v)\prob_{\sw}(i | v)\textrm{.}
\end{align*}

\noindent $\prob_{S, \sw}(i)$ can be also viewed as the expected fraction of issues important for $i$, for which the committee $C$ would make decisions consistent with $i$'s preferences. Similarly as in the deterministic model, we will call $\prob_{S, \sw}(i)$ the ultimate satisfaction of a voter $i$ from the committee $S$.

\begin{example}
Consider the illustrative example from Section~\ref{sec:correlatedVoters} where one third of voters are left-wing, one third are right-wing, and one third are centrists. Recall that left-wing, right-wing, and centrists candidates can be represented by points 0, $\nicefrac{1}{2}$, and 1 on the line, respectively. For this example the probability of representation $p_{i, c}$ is equal to $p_{i, c} = 1 - d(i, c)$, where $d(i, c)$ is the distance between the points corresponding to voter $i$ and candidate $c$.
\end{example}

The \emph{approval model} is a special case of the probabilistic model where there are only two ``allowed'' values of scores. Thus, in the approval model we assume that there exists two values $p, q \in [0, 1]$ with $q = p-1$, such that for each voter $i$ and each candidate $c$ we have $p_{i, c}, q_{i, c} \in \{p, q\}$. This model captures scenarios when each voter can view each candidate as belonging to one of the two extreme categories: each candidate can be either ``good'' or ``bad'' from $i$'s point of view. This model is appealing because in many real-life scenarios the voters express their preferences by assigning candidates to one of the two groups: either by approving or by disapproving them.

In contrast to the previous section, here we assume that the scores that the voters assign to candidates are scaled: we assume that a voter $i$ assigns score $1$ to candidate $c$ if $p_{i, c} = p$ and score $0$ if $p_{i, c} = q$. These scores will be given as an input to the multiwinner election rule $\mw$. The fact that we use rescaled scores will not affect our further results, but will enable us to use a more readable notation.

\subsection{Optimality of Known Decision Rules under Approval Voting}

We start our analysis with the approval model, i.e., from the case when the voters have two types of scores only. 
We start by observing that the characterization of the optimal rule for the random dictatorship method from Theorem~\ref{thm:majority} can be generalized to the approval model.

\begin{proposition}\label{thm:kapproval}
The top-$K$ rule is optimal for the random dictatorship rule in the approval model.
\end{proposition}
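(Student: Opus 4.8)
The idea is to mirror the second half of the proof of Theorem~\ref{thm:majority}, but now accounting for the fact that each voter $i$ may consider several issues important and that the probability of representation $p_{i,c}$ takes one of two values $p$ or $q=1-p$ (rescaled to scores $1$ and $0$). First I would fix a committee $S$ and a voter $i$ and compute $\prob_{S, \sw}(i)$ when $\sw$ is the uniformly random dictatorship rule. Using the formula $\prob_{S, \sw}(i) = \sum_{v \in D_j^K} \prob_{S}(v)\prob_{\sw}(i \mid v)$, and the fact that for the random dictatorship rule $\sw(a) = \nicefrac{a}{K}$, one gets $\prob_{\sw}(i\mid v) = \tfrac{1}{K}\sum_{d \in v}\indicator_{d = d_i^j}$, i.e.\ the rule picks each committee member with probability $\nicefrac{1}{K}$ and that member decides according to its own vote. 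Taking the expectation over the committee members' votes $v$, the contribution of member $c$ is exactly $\nicefrac{1}{K}\cdot p_{i,c}$, so that
\begin{align*}
\prob_{S, \sw}(i) = \frac{1}{K}\sum_{c \in S} p_{i, c} \textrm{.}
\end{align*}
Crucially this quantity does not depend on which particular important issue $D_j \in \important(i)$ we started from — consistent with the remark that $\prob_{S, \sw}(i)$ is the expected fraction of $i$'s important issues decided in $i$'s favour, which is just the average of the per-issue probabilities, each of which equals the above.

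Next I would sum over voters: $\sum_i \prob_{S, \sw}(i) = \tfrac{1}{K}\sum_i \sum_{c \in S} p_{i,c} = \tfrac{1}{K}\sum_{c \in S}\sum_i p_{i,c}$. Since in the approval model $p_{i,c}$ is an affine function of the rescaled score $u_{i,c} \in \{0,1\}$ — namely $p_{i,c} = q + (p-q)u_{i,c}$ — we have $\sum_i p_{i,c} = nq + (p-q)\sum_i u_{i,c}$, so $\sum_{c \in S}\sum_i p_{i,c} = Knq + (p-q)\sum_{c \in S}\sum_i u_{i,c}$. The first term is a constant independent of $S$, and $p - q \ge 0$, so maximizing $\sum_i \prob_{S,\sw}(i)$ over committees of size $K$ is equivalent to maximizing $\sum_{c \in S}\sum_i u_{i,c}$, i.e.\ the total approval score of the committee. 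That is precisely the objective of the top-$K$ rule (the OWA rule with $\alpha = \langle 1,\dots,1\rangle$), so the top-$K$ rule elects an optimal committee, which is the claim.

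I do not expect a serious obstacle here; the proof is essentially a short computation. The one point requiring minor care is the reduction from the potentially multi-issue, real-valued probabilistic setting back to the familiar approval objective: one must note (a) that $\prob_{S,\sw}(i)$ is the same for every important issue of $i$ and hence equals its average, so the number $|\important(i)|$ of important issues plays no role in the per-voter satisfaction, and (b) that the affine rescaling $p_{i,c} = q + (p-q)u_{i,c}$ with $p-q = 2p-1 \ge 0$ preserves the argmax over committees — the additive constant $Knq$ and the positive multiplicative factor $p-q$ are both irrelevant to which committee is selected. (If $p = q = \nicefrac12$ every committee is trivially optimal and the statement holds vacuously.) Everything else is a direct specialization of the machinery already set up for $\prob_{S,\sw}(i)$ together with the observation, already used in Theorem~\ref{thm:majority}, that $\mathrm{apprv}_i(S) = \sum_{c\in S} u_{i,c}$.
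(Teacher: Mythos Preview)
Your proposal is correct and follows essentially the same approach as the paper: both compute $\prob_{S,\sw}(i)$ under random dictatorship as a linear (affine) function of the number of approved members of $S$, namely $\prob_{S,\sw}(i)=\mathrm{apprv}_i(S)\cdot\frac{p-q}{K}+q$, and conclude that the optimal committee maximizes $\sum_i \mathrm{apprv}_i(S)$, which is exactly the top-$K$ objective. Your write-up is slightly more explicit about the irrelevance of multiple important issues and the degenerate case $p=q$, but the argument is the same.
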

\begin{proof}
Similarly as in the proof of Theorem~\ref{thm:majority}, for a committee $S$ we define $\mathrm{apprv}_i(S)$ as the number of candidates from $S$ that are approved by $i$. Let $p$ and $q$ denote the probabilities of representation of $i$ by candidates approved of and disapproved of by $i$, respectively. Naturally, $p > q$. For each committee $S$, and each member $c \in S$ the probability that during the uniformly random dictatorship ballot regarding an issue, the final decision will be made by the committee member $c$, is equal to $\nicefrac{1}{K}$. The probability that $S$ will make decision according to $i$'s preferences is equal to:
\begin{align*}
\prob_{S, \sw}(i) &= \sum_{c \in S} \frac{1}{K} \cdot (p\indicator_{\text{$i$ approves of $c$}} + q\indicator_{\text{$i$ disapproves of $c$}}) \\
                            &= \mathrm{apprv}_i(S) \cdot \frac{p}{K} + (K - \mathrm{apprv}_i(S)) \cdot \frac{q}{K} \\
                            &= \mathrm{apprv}_i(S) \cdot \frac{p - q}{K} + q\textrm{.}
\end{align*}
Consequently, a committee $S$ is optimal if it maximizes $\sum_i\mathrm{apprv}_i(S)$. Exactly such committee is elected by the top-$K$ rule.
\end{proof}

Below, we show a more general result that characterizes a class of optimal election systems.

\begin{theorem}\label{thm:expressivness}
For each decision rule $\sw$, there exists a $K$-element vector $\alpha$, such that $\alpha$-rule is optimal for $\sw$ in the approval model. 
\end{theorem}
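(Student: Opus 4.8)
The plan is to compute the ultimate satisfaction $\prob_{S, \sw}(i)$ explicitly in the approval model and show that, as a function of the committee $S$, it depends only on how many members of $S$ are approved by $i$ — and moreover that this dependence can be captured by an OWA aggregation. Fix a voter $i$ and a committee $S$, and let $a = \mathrm{apprv}_i(S)$ be the number of members of $S$ approved by $i$. Then each approved member votes according to $i$'s preferences with probability $p$ and each disapproved member with probability $q$, all independently. The number $X$ of committee members voting consistently with $i$ on a given issue is therefore a sum of $a$ Bernoulli$(p)$ and $K-a$ Bernoulli$(q)$ variables, and $\prob_{S,\sw}(i) = \mathbb{E}[\sw(X)]$ where the expectation is over this distribution. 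Crucially, this value depends on $S$ only through $a$; call it $f(a)$. So I would first establish that $\prob_{S,\sw}(i) = f(\mathrm{apprv}_i(S))$ for a fixed function $f : \{0, 1, \dots, K\} \to \reals$ determined by $\sw$, $p$, and $q$.

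Next I would exhibit the weight vector. For a voter $i$ with approval scores, the sorted scores $u_{i,S}(1) \geq \dots \geq u_{i,S}(K)$ are a string of $a$ ones followed by $K-a$ zeros, so $\sum_{j=1}^{K} \alpha_j u_{i,S}(j) = \sum_{j=1}^{a} \alpha_j = A(a)$ where $A(a) = \alpha_1 + \dots + \alpha_a$ (with $A(0) = 0$). Thus the $\alpha$-satisfaction of $i$ from $S$ equals $A(\mathrm{apprv}_i(S))$, and the $\alpha$-rule maximizes $\sum_i A(\mathrm{apprv}_i(S))$. Comparing with Definition~\ref{def:utilOptimal}, the $\alpha$-rule is optimal for $\sw$ as long as we can choose $\alpha$ so that $A(a) = f(a)$ for all $a \in \{0, \dots, K\}$ — or, more precisely, so that $A$ and $f$ induce the same preference order over committees. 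Since $A(a) - A(a-1) = \alpha_a$, the natural choice is $\alpha_a = f(a) - f(a-1)$ for $a = 1, \dots, K$; this is a $K$-element vector and it gives $A(a) = f(a) - f(0)$, which differs from $f(a)$ only by the additive constant $f(0)$ independent of $S$ and hence does not affect the argmax. Therefore the $\alpha$-rule with $\alpha_a = f(a) - f(a-1)$ selects exactly the committees maximizing $\sum_i f(\mathrm{apprv}_i(S)) = \sum_i \prob_{S,\sw}(i)$, i.e., the optimal committees.

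The main point requiring care — and what I expect to be the only real subtlety — is that the definition of an OWA rule as stated does not obviously require the weights $\alpha_j$ to be nonnegative or nonincreasing; the paper only says $\alpha$ is a ``$K$-element vector.'' If no sign restriction is imposed, the construction above works verbatim. If one does want monotone nonnegative weights (as is standard for genuine OWA operators), one would need to argue that $f$ is nondecreasing and concave — nondecreasing because raising $\mathrm{apprv}_i(S)$ can only help ($p > q$, so stochastically more consistent votes, and $\sw$ is monotone in a suitable sense after using symmetry), and concavity would need a separate convexity argument about $\mathbb{E}[\sw(X_a)]$ as $a$ increases. I would therefore state the theorem with the weaker reading (arbitrary real weight vector), note that Proposition~\ref{thm:kapproval} is the special case $\sw = $ random dictatorship (where $f$ is affine in $a$, giving constant weights), and only remark on the monotone-concave refinement if $f$ turns out to have those properties for the decision rules of interest. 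A minor additional bookkeeping step is handling the case where two committees tie under $f$: since we only need the $\alpha$-rule to return \emph{some} optimal committee, and the argmax sets of $\sum_i f(\mathrm{apprv}_i(S))$ and $\sum_i A(\mathrm{apprv}_i(S))$ coincide exactly (they differ by a constant), this is immediate.
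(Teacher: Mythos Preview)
Your proposal is correct and follows essentially the same approach as the paper: both reduce $\prob_{S,\sw}(i)$ to a function $f(\mathrm{apprv}_i(S))$ of the approval count and then take successive differences to obtain the OWA vector. The only cosmetic differences are that the paper writes out the distribution of $X$ explicitly and sets $\alpha_1 = f(1)$ rather than $f(1)-f(0)$ (so their partial sums hit $f(\ell)$ on the nose for $\ell\ge 1$ but silently ignore the $\ell=0$ offset you handle), and the paper simply asserts $f(\ell+1)\ge f(\ell)$ from $p\ge q$ without the careful discussion of sign constraints you give; your treatment of both points is in fact a bit more scrupulous.
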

\begin{proof}
Let $S_{\ell, i}$ denote a committee that has exactly $\ell$ members approved by $i$, and let $\prob(i, s, \ell)$ be the probability that exactly $s$ members of $S_{\ell, i}$ will vote accordingly to $i$'s preferences. We have:
\begin{align*}
\prob(i, s, \ell) &= \sum_{x=1}^K \indicator_{x\leq \ell} \indicator_{x \leq s}\indicator_{s-x \leq K-\ell} \cdot {\ell \choose x} p^xq^{\ell - x} {K-\ell \choose s - x}q^{s-x}p^{K-\ell-s+x} \textrm{.}
\end{align*}
We can see that $\prob(i, s, \ell)$ does not depend on $i$. Further, let $\prob_{\sw}(i | s)$ be the probability that the rule $\sw$ makes decision consistent with $i$'s preferences on issue $D_j$, assuming $s$ members of $S_{\ell, i}$ vote accordingly to $i$'s preferences.
\begin{align*}
\prob_{\sw}(i | s) = \sw(s)\indicator_{d_i^j = \acc} + (1 - \sw(K-s)) \indicator_{d_i^j = \rej} \textrm{.}
\end{align*}
Since either $d_i^j = \acc$ or $d_i^j = \rej$ and since $\sw(s) = (1 - \sw(K-s))$, we get that $\prob_{\sw}(i | s) = \sw(s)$, and thus $\prob_{\sw}(i | s)$ does not depend on $i$.
Consequently,  we can calculate the ultimate satisfaction $\prob_{S_{\ell, i}, \sw}(i)$ of a voter $i$ from a committee $S_{\ell, i}$, so that this value does not depend on $i$:
\begin{align*}
\prob_{S_{\ell, i}, \sw}(i) = \sum_{s = 1}^{K}\prob(i, s, \ell) \prob_{\sw}(i | s) \textrm{.}
\end{align*}
Because $\prob_{S_{\ell, i}, \sw}(i)$ does not depend on $i$, we will denote it as $\prob_{S_{\ell}, \sw}$.
Naturally, since $p \geq q$, we have $\prob_{S_{\ell+1}, \sw} \geq \prob_{S_\ell, \sw}$, for each $\ell$. Now, we can see that the following vector:
\begin{align*}
\alpha = \Big\langle &\prob_{S_{1}, \sw}, \;(\prob_{S_{2}, \;\sw} - \prob_{S_{1}, \sw}), \;(\prob_{S_{3}, \sw} - \prob_{S_{2}, \sw}), \\
                 & \dots, (\prob_{S_{K}, \sw} - \prob_{S_{K-1}, \sw})\rangle
\end{align*}
satisfies the requirement from the thesis. Indeed, in the $\alpha$-rule the satisfaction of a voter from a committee with $\ell$ approved members is the sum of first $\ell$ coefficients of $\alpha$, which is $\prob_{C_\ell, \sw}$. This completes the proof.
\end{proof}

The above theorem can be viewed as an evidence of the expressiveness and power of $\owa$ election rules. Unfortunately, $\owa$ election rules are not sufficiently expressive to describe the non-approval model, which we address in the next subsection.

\subsection{Optimality of Known Decision Rules under Probabilistic Model} 

Let us move to the probabilistic model in its full generality.

To get characterization similar to the one given in Theorem~\ref{thm:expressivness}, but for arbitrary scores, we would need to consider a more general class of election rules, that is rules in which the satisfaction of a single voter $i$ from a committee $C$ is expressed as a linear combination of products of $i$' scores assigned to individuals, i.e., as a linear combination of the values from the set $\big\{\prod_{c \in C'}u_{i, c}\colon C' \subseteq C\big\}$ (in contrast to a linear combination of scores only). Such rules that consider inseparable committees were considered e.g., by Ratliff~\cite{ratliff2006selecting,ratliff2003some}. We formalize this observation in the following theorem.

\begin{theorem}\label{thm:generalOptimalRules}
Let $\sw: \naturals \to \reals$ be the rule that the elected committee $S$ uses to make final decisions. Let us consider the function $v: \pow_K(C) \times \Pi^{n}(C) \to \reals$ defined in the following way:
\begin{align}\label{eq:score}
v(S, \langle u_i\rangle_{i \in N}) = \sum_{i \in N} \sum_{j = 1}^{|S|} \sw(j) \sum_{S_{\ap} \subseteq \pow_j(S)} \Bigg( \prod_{c \in S_{\ap}} u_{i, c} \prod_{c \notin S_{\ap}}\big(1 -  u_{i, c} \big)\Bigg)\textrm{,}
\end{align}
The rule that for each score profile $\langle u_i\rangle_{i \in N}$ selects the committee $C$ that maximizes $v(C, \langle u_i\rangle_{i \in N})$ is optimal for $\sw$.
\end{theorem}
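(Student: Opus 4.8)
The plan is to show that the quantity $v(S, \langle u_i\rangle_{i \in N})$ defined in Equation~\eqref{eq:score} is, term by term, exactly the total ultimate satisfaction $\sum_{i \in N}\prob_{S, \sw}(i)$ in the probabilistic model. Once this identity is established, the theorem follows immediately from Definition~\ref{def:utilOptimal}: the rule maximizing $v$ is by construction the rule selecting an optimal committee. So the whole proof reduces to an unwinding of the definition of $\prob_{S, \sw}(i)$ given just above the theorem statement, combined with a bookkeeping argument about which committee votes lead to a decision consistent with $i$'s preferences.

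First I would fix a voter $i$, an important issue $D_j \in \important(i)$, and recall that
\begin{align*}
\prob_{S, \sw}(i) = \sum_{v \in D_j^K} \prob_{S}(v)\,\prob_{\sw}(i \mid v)\textrm{,}
\end{align*}
where $\prob_{S}(v) = \prod_{c \in S}\big(\indicator_{v[j] = d_i^j}\,p_{i,c} + \indicator_{v[j] \neq d_i^j}\,q_{i,c}\big)$ and, as computed in the excerpt, $\prob_{\sw}(i \mid v) = \sw\big(\sum_{d \in v}\indicator_{d = d_i^j}\big)$. The key reindexing step is to group the committee votes $v$ by the set $S_{\ap} \subseteq S$ of committee members who vote in agreement with $i$ on $D_j$ (i.e.\ cast $d_i^j$). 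For a fixed such set with $|S_{\ap}| = \ell$, the probability that exactly the members of $S_{\ap}$ vote with $i$ is $\prod_{c \in S_{\ap}} p_{i,c}\prod_{c \notin S_{\ap}} q_{i,c}$, and the value $\prob_{\sw}(i\mid v)$ depends only on $\ell$, equalling $\sw(\ell)$ (after using symmetry of $\sw$ to rewrite $\sw$ of the number of agreeing votes regardless of whether $d_i^j$ is $\acc$ or $\rej$ — this is exactly the three-line computation already displayed in the excerpt). Therefore
\begin{align*}
\prob_{S, \sw}(i) = \sum_{\ell = 1}^{|S|} \sw(\ell) \sum_{S_{\ap} \in \pow_\ell(S)} \Bigg(\prod_{c \in S_{\ap}} p_{i,c}\prod_{c \notin S_{\ap}}q_{i,c}\Bigg)\textrm{.}
\end{align*}
Recalling that $u_{i,c} = p_{i,c}$ and $1 - u_{i,c} = q_{i,c}$ by definition, this is precisely the inner summand of $v(S, \langle u_i\rangle_{i \in N})$; summing over $i \in N$ gives $v(S,\langle u_i\rangle) = \sum_i \prob_{S,\sw}(i)$, and the theorem follows.

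The only genuine subtlety — and the step I would be most careful about — is the handling of the two indicator terms $\indicator_{d_i^j = \acc}$ and $\indicator_{d_i^j = \rej}$: one must check that after collapsing $\prob_{\sw}(i\mid v)$ to $\sw$ of the number of votes cast for $i$'s preferred alternative, the same weight $\sw(\ell)$ genuinely applies uniformly whether $i$ prefers $\acc$ or $\rej$, so that $v$ need not branch on the direction of $i$'s preference. This is what licenses writing $v$ purely in terms of the scores $u_{i,c}$ with no reference to $d_i^j$; the symmetry requirement $\sw(a) = 1 - \sw(K-a)$ is doing exactly this work, as was already exploited in the derivation of $\prob_{\sw}(i\mid v)$ preceding the theorem. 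A minor additional remark is that the sum over $j$ in~\eqref{eq:score} runs up to $|S| = K$ and starts at $1$; the $\ell = 0$ term contributes $\sw(0) = 0$ (by symmetry $\sw(0) = 1 - \sw(K)$, and for the standard decision rules $\sw(K) = 1$), so its omission is harmless, but I would note it explicitly to match the summation bounds in the statement. Everything else is routine expansion of products.
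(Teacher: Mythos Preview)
Your proposal is correct and matches the paper's approach exactly: both arguments identify $v(S, \langle u_i\rangle_{i \in N})$ with $\sum_{i \in N} \prob_{S, \sw}(i)$ by recognizing that the inner sum over $S_{\ap} \in \pow_j(S)$ is the probability that precisely $j$ committee members vote in agreement with voter $i$, after which optimality is immediate from Definition~\ref{def:utilOptimal}. Your write-up is in fact more careful than the paper's own proof, which dispatches the result in essentially one sentence; your explicit tracking of the symmetry condition on $\sw$ and your remark about the omitted $j=0$ term go beyond what the paper spells out.
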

\begin{proof}
The thesis follows from the fact that $v(C, \langle u_i\rangle_{i \in N})$ is the sum of ultimate expected satisfactions over all voters. Indeed, for each possible value of $j$ representing the number of committee members voting according to $i$'s preferences, the expression:
\begin{align*}
\sum_{S_{\ap} \subseteq \pow_j(S)} \Bigg( \prod_{c \in S_{\ap}} u_{i, c} \prod_{c \notin S_{\ap}}\big(1 -  u_{i, c} \big)\Bigg)
\end{align*}
gives the probability that exactly $j$ committee members will vote according to $i$'s preferences. In the above formula $S_{\ap}$ represents the set of $j$ committee members who vote according to $i$'s preferences.
\end{proof}

Nevertheless, for the case of the uniformly random dictatorship rule used to make final decisions, we can get a result similar to the one given in Proposition~\ref{thm:kapproval} even for the case of arbitrary scores.

\begin{proposition}\label{thm:kTopGeneralUtilities}
Top-$K$ rule is optimal for the random dictatorship rule in the probabilistic model.
\end{proposition}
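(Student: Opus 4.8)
The plan is to mirror the proof of Proposition~\ref{thm:kapproval}, the only difference being that we now work with arbitrary scores $u_{i,c} = p_{i,c} \in [0,1]$ rather than with the two-valued approval scores. The key structural fact to exploit is that the uniformly random dictatorship rule is ``linear'' in the committee members: on every issue the decision is delegated to a single committee member chosen uniformly at random, so the probability of a consistent decision should decompose as a plain average of the per-candidate representation probabilities, with no interaction terms between candidates.

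Concretely, I would first compute $\prob_{S,\sw}(i)$ for an arbitrary voter $i$ and an $i$-important issue $D_j$. Since under random dictatorship each $c \in S$ is the decisive voter with probability $\nicefrac{1}{K}$, and conditioned on $c$ being decisive the decision agrees with $i$ with probability $p_{i,c} = u_{i,c}$, the law of total probability (together with the independence assumption on how candidates vote, already used when computing $\prob_{\sw}(i\mid v)$ above) gives
\[
\prob_{S,\sw}(i) = \sum_{c \in S} \frac{1}{K}\, u_{i,c} = \frac{1}{K}\sum_{c \in S} u_{i,c}.
\]
An alternative derivation, which I might prefer for uniformity with the rest of the section, is to substitute $\sw(j) = \nicefrac{j}{K}$ into the score function $v$ of Theorem~\ref{thm:generalOptimalRules} and observe that the inner double sum over $S_{\ap}$ is exactly $\nicefrac{1}{K}$ times the expected number of committee members voting according to $i$, which by linearity of expectation equals $\nicefrac{1}{K}\sum_{c \in S} u_{i,c}$; either route yields the same identity.

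Summing over all voters and exchanging the order of summation then gives
\[
\sum_{i \in N} \prob_{S,\sw}(i) = \frac{1}{K}\sum_{c \in S}\Big(\sum_{i \in N} u_{i,c}\Big),
\]
so the committee maximizing the total ultimate satisfaction is obtained by picking the $K$ candidates with the largest aggregate score $\sum_{i \in N} u_{i,c}$ --- which is precisely the committee returned by the top-$K$ rule. This establishes the proposition.

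I do not anticipate a genuine obstacle: the argument is essentially a one-line averaging computation. The only step that deserves a sentence of care is the first displayed identity, namely that the two sources of randomness (which committee member becomes the dictator, and how that member votes on the issue) combine to give exactly $\nicefrac{1}{K}\sum_{c \in S} p_{i,c}$; but this is immediate from conditioning on the identity of the dictator, and it is already implicit in the computation of $\prob_{\sw}(i\mid v)$ preceding the statement.
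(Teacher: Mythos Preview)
Your proposal is correct and follows essentially the same argument as the paper: compute $\prob_{S,\sw}(i)=\tfrac{1}{K}\sum_{c\in S}p_{i,c}$ by conditioning on the randomly chosen dictator, sum over voters, and conclude that the top-$K$ candidates by total score maximize the total ultimate satisfaction. The paper's proof is even terser than yours, but the idea is identical; the alternative route through Theorem~\ref{thm:generalOptimalRules} you mention is not used there.
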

\begin{proof}
The reasoning is similar to the proof of Proposition~\ref{thm:kapproval}.
The probability that a committee $S$ will vote according to $i$'s preferences is equal to:
\begin{align*}
\prob_{S, \sw}(i) = \frac{1}{K} \sum_{c \in S} p_{i, c} \textrm{.}
\end{align*}
The rule that maximizes the total score of selected candidates, also maximizes their total expected ultimate satisfaction. This completes the proof.
\end{proof}

Theorem~\ref{thm:generalOptimalRules} gives us a constructive way of defining an optimal rule in the most general case. Unfortunately, such rule is often hard to compute. We briefly discuss computational issues in Appendix~\ref{sec:computationalAspects}.

\subsection{Optimality of Full Multiwinner Rules}\label{sec:indirect_rules}

In the previous subsection we studied the optimality of multi-winner election rules, given information on what decision system the committee will use to make final decisions.
In this section we show that in our probabilistic model we can compare qualities of the pairs of multi-winner and decision systems. Thus, our results not only suggest which multi-winner election rule is suitable for electing a committee, but also indicates which decision rule should be used by the committee to make final decisions.
We start with introducing the definition of the full multiwinner rule. This definition introduces a novel concept to the literature on social choice: it allows the voters to elect not only committees, but also decision rules that these committees use to make final decisions.

\begin{definition}
Let $\mathcal{SW}$ be the set of all symmetric randomized decision rules.
A \emph{full multiwinner rule} is a function $\mathcal{F}: \Pi \to \mathcal{P}(A) \times \mathcal{SW}$ that for each score profile $\pi \in \Pi$ returns a pair $\mathcal{F}(\pi) = (S, \sw)$, where $S \in \mathcal{P}(C)$ is the elected committee, and $\sw \in \mathcal{SW}$ is the decision rule that $S$ will use when making final decisions. We use the notation $\mathcal{F}(\pi)[1] = S$ and $\mathcal{F}(\pi)[2] = \sw$.
\end{definition}

Below we define a partial order on full multiwinner rules. This definition provides a way to compare full multiwinner rules, in particular it allows us to
extend the concept of optimality to full multiwinner rules. 

\begin{definition}\label{def:optimality}
A full multiwinner rule $\mathcal{F}$ \emph{weakly dominates} a full multiwinner rule $\mathcal{G}$, if for each score profile $\pi \in \Pi(N)$, $\mathcal{F}$ returns a committee and a decision rule that gives the total expected ultimate satisfaction of the voters at least as high as the one given by a committee and a decision rule returned by $\mathcal{G}$:
\begin{align}\label{eq:weakDominant}
\sum_i\prob_{\mathcal{F}(\pi)[1], \mathcal{F}(\pi)[2]}(i) \geq \sum_i\prob_{\mathcal{G}(\pi)[1], \mathcal{G}(\pi)[2]}(i) \textrm{.}
\end{align}
$\mathcal{F}$ \emph{strongly dominates} $\mathcal{G}$ for $P$ if it weakly dominates $\mathcal{G}$ and if there exists profile $\pi \in \Pi$ for which Inequality~\eqref{eq:weakDominant} is strict.
A full multiwinner rule $\mathcal{F}$ is \emph{optimal} if it weakly dominates every other full multiwinner rule.
\end{definition}

In the previous sections we assumed that the decision rule used by the committee is fixed, and that we use voters' preferences only to determine the winning committee of $K$ candidates. This traditional definition can be mapped to the new model.
 
\begin{definition}
For a multi-winner rule $\mw$ and a decision rule $\sw$, by $\mw$ followed by $\sw$ we call the full multiwinner rule that selects committee using $\mw$ and, independently of voters' preferences, always uses $\sw$ to make final decisions.
\end{definition}

Now, we show how to apply Definition~\ref{def:optimality} in perhaps the simplest variant, that is in the deterministic model. First, we define a new election rule $\textsc{Comb}$ as a combination of $\frac{K+1}{2}$-median rule followed by majority rule with the $K$-approval rule followed by the random dictatorship rule.

\begin{definition}\label{def:combination}
The rule $\textsc{Comb}$ is defined as follows. Let $S$ and $S'$ be committees elected by $K$-approval rule and by $\frac{K+1}{2}$-median rule, respectively. Let $\mathrm{apprv}$ be the total approval score of $S$ and let $\mathrm{owa}$ be the total OWA score of $S'$. If $\frac{\mathrm{apprv}}{K} > \mathrm{owa}$, then $\textsc{Comb}$ returns the pair $(C$, $\mathrm{random}$ $\mathrm{dictatorship})$. Otherwise, $\textsc{Comb}$ returns the pair $(C'$, $\mathrm{majority})$. 
\end{definition}

It is remarkable that with this simple idea we obtained the new rule, $\textsc{Comb}$, that strongly dominates both rules that it is derived from.

\begin{proposition}
In the deterministic model, $\textsc{Comb}$ strongly dominates $\frac{K+1}{2}$-median followed by majority, and $K$-approval followed by random dictatorship.
\end{proposition}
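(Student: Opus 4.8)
The plan is to show that $\textsc{Comb}$ weakly dominates each of the two rules it is built from, and then exhibit a profile on which it strictly beats each of them. The weak domination part is essentially immediate from the construction: by Theorem~\ref{thm:majority}(i), the $\frac{K+1}{2}$-median rule followed by majority achieves, on any profile, the total ultimate satisfaction $\mathrm{owa}$ (the optimal majority-value), since in the deterministic model $\prob_{S',\mathrm{majority}}(i)$ is exactly the $\frac{K+1}{2}$-median satisfaction of $i$ from $S'$, summed over $i$. Likewise, by Theorem~\ref{thm:majority}(ii) the $K$-approval rule followed by random dictatorship achieves total satisfaction $\frac{\mathrm{apprv}}{K}$. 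Since $\textsc{Comb}$ returns whichever of these two options has the larger objective value $\max\{\mathrm{owa}, \tfrac{\mathrm{apprv}}{K}\}$, on every profile it delivers a total ultimate satisfaction at least as large as either individual rule. Hence $\textsc{Comb}$ weakly dominates both.

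For the strict part, I would reuse the two profiles from the illustrative example in Section~\ref{sec:illustrativeExample}, namely Tables~\ref{tab:exampleApprovalProfileA} and~\ref{tab:exampleApprovalProfileB} (suitably viewed as instances of the deterministic single-issue model). On the profile of Table~\ref{tab:exampleApprovalProfileA}, the optimal majority committee satisfies five voters while the best $K$-approval-plus-random-dictatorship value is four (as computed in the text: the expected number of satisfied voters is $4$ for the best committee under random dictatorship), so $\mathrm{owa} > \tfrac{\mathrm{apprv}}{K}$, $\textsc{Comb}$ picks the majority branch, and it strictly exceeds the value $4$ achieved by $K$-approval followed by random dictatorship. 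On the profile of Table~\ref{tab:exampleApprovalProfileB}, the optimal majority committee satisfies only two voters, whereas committee $S' = \{c_1, c_2, c_8\}$ under random dictatorship gives expected satisfaction $2\tfrac{2}{3}$, so $\tfrac{\mathrm{apprv}}{K} > \mathrm{owa}$, $\textsc{Comb}$ picks the random-dictatorship branch, and it strictly exceeds the value $2$ achieved by $\frac{K+1}{2}$-median followed by majority. Combining: $\textsc{Comb}$ weakly dominates each rule and strictly beats each on at least one profile, which is exactly strong domination in the sense of Definition~\ref{def:optimality}.

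The main thing to be careful about — the only real obstacle — is making the two example profiles fit the formal deterministic model cleanly: each voter must have a single important issue, and the approval scores $u_{i,c}$ fed to the multiwinner rule must be exactly the indicator of ``$c$ agrees with $i$ on $i$'s important issue.'' One must check that the approval matrices in Tables~\ref{tab:exampleApprovalProfileA} and~\ref{tab:exampleApprovalProfileB} are realizable this way (they are, by choosing each candidate's stance on each issue appropriately), and then recompute $\mathrm{apprv}$, $\mathrm{owa}$, and the two branch values from the formulas $\prob_{S,\mathrm{majority}}(i)=\indicator_{\mathrm{apprv}_i(S)\ge (K+1)/2}$ and $\prob_{S,\mathrm{rd}}(i)=\mathrm{apprv}_i(S)/K$ with $K=3$. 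These are the small arithmetic checks already sketched in Section~\ref{sec:illustrativeExample}; once they are in place, the domination statement follows directly.
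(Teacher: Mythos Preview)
Your proposal is correct and follows essentially the same approach as the paper: establish that the total ultimate satisfaction under $\frac{K+1}{2}$-median followed by majority equals $\mathrm{owa}$, under $K$-approval followed by random dictatorship equals $\frac{\mathrm{apprv}}{K}$, and under $\textsc{Comb}$ equals $\max\{\mathrm{owa},\frac{\mathrm{apprv}}{K}\}$, then exhibit profiles where each branch strictly wins. The paper's proof dispatches the strict part with a one-line ``it is easy to see that there exist profiles where $\frac{\mathrm{apprv}}{K}$ is strictly greater than $\mathrm{owa}$ and vice versa,'' whereas you spell this out using the two profiles from Section~\ref{sec:illustrativeExample}; your realizability concern is also harmless, since in the deterministic model any $0$/$1$ approval matrix is realizable by giving each voter her own important issue and setting candidates' stances accordingly.
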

\begin{proof}
Let $\mathrm{apprv}$ and $\mathrm{owa}$ be defined as in Definition~\ref{def:combination}. Repeating the analysis from the proofs of Theorems~\ref{thm:majority} and Proposition~\ref{thm:kapproval}, we get that the total ultimate satisfaction of voters under $\frac{K+1}{2}$-median followed by majority is equal to $\mathrm{owa}$, and that the total ultimate satisfaction of voters under $K$-approval followed by random dictatorship is equal to $\frac{\mathrm{apprv}}{K}$. The total ultimate satisfaction of voters and under $\textsc{Comb}$ is equal to $\max(\frac{\mathrm{apprv}}{K}, \mathrm{owa})$. It is easy to see that there exist profiles where $\frac{\mathrm{apprv}}{K}$ is strictly greater than $\mathrm{owa}$ and vice versa.
This completes the proof.
\end{proof}

The natural question is whether we can find an optimal rule in our general probabilistic model. Interestingly, for the deterministic model the answer is positive. The question whether this result can be extended to the general probabilistic model is still open.

\begin{theorem}
In the deterministic model there exists an optimal full multiwinner rule.
\end{theorem}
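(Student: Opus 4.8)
The plan is to show that for each fixed score profile (equivalently, each fixed approval profile of the $n$ voters over the $m$ candidates), the set of achievable total-ultimate-satisfaction values, ranging over all pairs $(S, \sw)$ with $|S| = K$ and $\sw \in \mathcal{SW}$, has a maximum, and that this maximizing pair can be chosen in a way that depends only on the profile. Since a full multiwinner rule is just a function from profiles to pairs $(S, \sw)$, defining $\mathcal{F}(\pi)$ to be any maximizing pair for $\pi$ yields, by construction, a rule that weakly dominates every other full multiwinner rule; so the whole content is the existence of the maximum for each fixed $\pi$.

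First I would fix $\pi$ and recall from the deterministic single-issue analysis (Section~\ref{sec:deterministicModel}) that $\prob_{S,\sw}(i) = \sw\big(\mathrm{apprv}_i(S)\big)$, where $\mathrm{apprv}_i(S) \in \{0, 1, \dots, K\}$ is the number of members of $S$ approved by $i$ --- this works because each voter's unique important issue makes the committee vote on it with $\mathrm{apprv}_i(S)$ members siding with $i$. Hence the total satisfaction for a pair $(S,\sw)$ is $\sum_{\ell=0}^{K} n_\ell(S)\,\sw(\ell)$, where $n_\ell(S) = |\{i : \mathrm{apprv}_i(S) = \ell\}|$. The key observation is that there are only finitely many committees $S$ (at most $\binom{m}{K}$), so it suffices to show that for each fixed committee $S$ the supremum over $\sw \in \mathcal{SW}$ of $\sum_\ell n_\ell(S)\,\sw(\ell)$ is attained. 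Now I would exploit the symmetry constraint $\sw(\ell) = 1 - \sw(K-\ell)$: writing $\beta_\ell = \sw(\ell)$ for $\ell > K/2$, the objective becomes $\sum_{\ell > K/2} (n_\ell(S) - n_{K-\ell}(S))\,\beta_\ell + (\text{const})$ when $K$ is odd (and with an extra fixed $\tfrac12 n_{K/2}(S)$ term when $K$ is even, since $\sw(K/2) = \tfrac12$ is forced). Each $\beta_\ell$ ranges freely over $[0,1]$ and appears linearly, so the supremum is attained at the vertex $\beta_\ell = \indicator_{n_\ell(S) > n_{K-\ell}(S)}$ (ties can be broken arbitrarily, e.g.\ $\beta_\ell = 1$), which corresponds to an admissible deterministic threshold-type decision rule in $\mathcal{SW}$. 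Thus the maximum over $(S,\sw)$ exists for every $\pi$.

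The main obstacle --- and it is a mild one --- is bookkeeping around whether $K$ is even or odd and around tie-breaking, since we must produce a rule that is a genuine function of $\pi$; I would handle this by fixing, once and for all, a deterministic tie-breaking convention (e.g.\ lexicographically least optimal $S$, and $\sw(\ell) = 1$ whenever $n_\ell(S) \ge n_{K-\ell}(S)$). One should also check that the resulting $\sw$ genuinely lies in $\mathcal{SW}$, i.e.\ is symmetric: this is immediate because we defined $\sw(K-\ell) = 1 - \sw(\ell)$ for the lower half and $\sw(K/2) = \tfrac12$ when needed. Finally I would note that the rule $\mathcal{F}$ so defined weakly dominates every full multiwinner rule $\mathcal{G}$: for any $\pi$, $\mathcal{G}(\pi)$ is some pair $(S,\sw)$ whose total satisfaction is, by the above, at most that of $\mathcal{F}(\pi)$, which is exactly Inequality~\eqref{eq:weakDominant}. (As a remark, $\textsc{Comb}$ from Definition~\ref{def:combination} is a special case of this construction restricted to two candidate pairs, which is why it already dominates both of its constituents.)
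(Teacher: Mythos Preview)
Your proposal is correct and follows essentially the same strategy as the paper: for each fixed profile, enumerate the finitely many committees $S$ and, for each, maximize the linear objective $\sum_\ell n_\ell(S)\,\sw(\ell)$ over the admissible decision rules. The only presentational difference is that the paper casts this inner maximization as a linear program (with an extra monotonicity constraint $\sw(\ell+1)\ge\sw(\ell)$) and appeals to LP solvability, whereas you exploit separability under the symmetry constraint to write down the optimal vertex solution explicitly.
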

\begin{proof}
The proof is constructive. We recall that a sin\-gle-win\-ner rule $\sw$ can be described by $K$ values $\sw(1), \dots, \sw(K)$. We use notation from the proof of Theorem~\ref{thm:expressivness}; let $S_{\ell, i}$ denote a committee that has exactly $\ell$ members approved by $i$, and let $\prob(i, s, \ell)$ denote the probability that exactly $s$ members of $S_{\ell, i}$ will vote accordingly to $i$'s preferences.

In the deterministic model $\prob(i, s, \ell) = \indicator_{s=\ell}$. Consequently, for a committee $S_{\ell, i}$ with $\ell$ approved members, we have $\prob_{S_{\ell, i}, \sw} = \sw(\ell)$. Let $\indicator_{\ell, i}$ denote a function such that $\indicator_{\ell, i}(S) = 1$ if $S$ contains $\ell$ elements approved by $i$, and $\indicator_{\ell, i}(S) = 0$ otherwise. For a given committee $S$ we can find an optimal decision rule $\sw$ by solving the following linear program:
{\upshape
  \begin{align*}
    &\text{maximize }    \sum_{i \in N}\prob_{S, \sw}(i) = \sum_{i \in N}\sw(\ell) \indicator_{\text{$i$ approves $\ell$ elements of $S$}} \\
    &\text{subject to: }\\  
    & \text{(a)}: \sw(\ell+1) \geq \sw(\ell),   \hspace{1.3cm}   1 \leq \ell \leq K-1 \\
    & \text{(b)}: 0 \leq \sw(\ell) \leq 1,                \hspace{2.6cm}   1 \leq \ell \leq K \\
    & \text{(c)}: \sw(\ell) = 1 - \sw(K - \ell),   \hspace{1.2cm}   1 \leq \ell \leq K \\
    & \text{(c)}: \sum_{\ell=1}^K\sw(\ell) = 1 \text{.}
  \end{align*}
}
The optimal full multiwinner rule tries all committees and selects such that gives the best solution to the integer program. From the solution of the integer program we can extract values $\sw(1), \dots, \sw(K)$ that describe the optimal decision rule that should be used to make final decisions.
\end{proof}

\section{Discusion \& Conclusion}

We defined a new model, called the voting committee model, which explores scenarios where a group of representatives is elected to make decisions on behalf of the voters. This model links utilities of the voters from the elected committee to their utilities from the committee's decisions regarding various matters. Intuitively, a satisfaction of voter $i$ from a committee $S$ is proportional to the probability that for the issues important for $i$, $S$'s decisions are consistent with $i$'s preferences.

Our results give positive support for employing representation-focused multiwinner election rules---indeed under several assumptions, the decisions made by such committees reflect the preferences of the population best. Most importantly, however, our model introduces a new framework for normative comparison of multiwinner voting rules. Indeed, the analysis presented in this paper can be repeated for more specific scenarios, where the relation between voters' preferences over candidates and respective probabilities of representation exhibits some specific structure.

We introduced the notion of a full multiwinner rule, in which voters elect a decisive committee along with the decision rule to be used. Under some assumptions, we showed that there exists an optimal full multiwinner rule and we described the way how it can be constructed. We believe that this is an interesting concept that deserves further attention.

Among many natural open questions, there is one we consider particularly appealing: is it possible to provide theoretical analysis suggesting which committee is the best for the case of spatial model, such as the one considered in Section~\ref{sec:correlatedVoters}?

\smallskip\noindent\textbf{Acknowledgements.}  The author thanks Jean Fran\c{c}ois Laslier, Marcin Dziubi\'nski and Piotr Faliszewski for their comments and for the fruitful discussion.

The author was supported by the European Research Council grant ERC-StG 639945.

\bibliographystyle{abbrv}
\bibliography{models}

\appendix
\section{Computational Aspects of Comparing Committees}\label{sec:computationalAspects}

Unfortunately, finding optimal committees is often computationally hard. For example, the problem of finding winners under Chamberlin--Courant rule is known to be $\np$-hard~\cite{complexityProportionalRepr} and hard from the point of view of parameterized complexity theory~\cite{fullyProportionalRepr}. On the positive side, this problem can be effectively solved in practice by using high-quality polynomial-time~\cite{sko-fal-sli:j:multiwinner} and fixed-parameter-tractable-time~\cite{conf/aaai/SkowronF15} approximation algorithms. Regrettably, not all variants of the problem of finding optimal committees can be well-approximated. For instance, Skowron~el~al.~\cite{owaWinner} showed that there exists no polynomial-time constant-approximation algorithm for the problem of finding winners under the $\frac{K+1}{2}$-median rule.

Even though the problem of finding optimal committees is computationally intractable, we can use our findings to compare committees, even in the case of arbitrary utilities. Indeed, Proposition~\ref{thm:computingScore} below shows that deciding which one of the given two committees is better according to our model is solvable in polynomial time. This observation can be used to make decisions about which from the several shortlisted committees should be selected in a particular instance of elections, or to derive more general conclusions about applicability of different multi-winner rules by comparing their qualities on real data describing voters' preferences~\cite{mat-wal:c:preflib}.

\begin{proposition}\label{thm:computingScore}
For a decision rule  $\sw: \naturals \to \reals$, a committee $S$, and a utility profile $\langle u_i\rangle_{i \in N}$, the value of the function $v$ defined in Equation~\eqref{eq:score} can be computed in polynomial time.
\end{proposition}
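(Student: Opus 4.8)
The plan is to circumvent the sum over the exponentially many sets $S_{\ap} \in \pow_j(S)$ by recognising it as a Poisson--binomial probability. Fix a voter $i$ and think of each committee member $c \in S$ as an independent coin that comes up ``heads'' (i.e., votes according to $i$'s preferences) with probability $u_{i,c}$ and ``tails'' with probability $1 - u_{i,c}$. Then, for each $j$,
\[
\sum_{S_{\ap} \in \pow_j(S)} \Big( \prod_{c \in S_{\ap}} u_{i,c} \prod_{c \notin S_{\ap}} (1 - u_{i,c}) \Big)
\]
is exactly the probability that precisely $j$ of these $|S| = K$ coins show heads; equivalently, it is the coefficient of $x^j$ in the univariate polynomial $P_i(x) = \prod_{c \in S}\big( (1 - u_{i,c}) + u_{i,c}\,x \big)$. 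This is immediate from expanding the product over the two choices available in each linear factor and collecting the monomials in which $x$ is picked from exactly $j$ of them.

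First I would, for each voter $i \in N$, compute the list of coefficients $[x^0]P_i, [x^1]P_i, \dots, [x^K]P_i$ by the obvious incremental scheme: start from the constant polynomial $1$ and, for each $c \in S$ in turn, multiply the running polynomial by the linear factor $(1 - u_{i,c}) + u_{i,c}\,x$. Multiplication by a linear factor updates a length-$(K+1)$ coefficient array in $O(K)$ arithmetic operations, and there are $|S| = K$ factors, so this costs $O(K^2)$ per voter; this is the standard dynamic program for Poisson--binomial coefficients, and the coefficient of $x^s$ is the natural generalisation, to arbitrary per-candidate probabilities, of the quantity $\prob(i, s, \ell)$ appearing in the proof of Theorem~\ref{thm:expressivness}. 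Given these coefficients, $\sum_{j=1}^{K}\sw(j)\,[x^j]P_i$ is a weighted sum of $K$ numbers and costs $O(K)$ further operations, using that the decision rule $\sw$ is represented by its $K$ values $\sw(1), \dots, \sw(K)$ so that each lookup is $O(1)$. Summing over all $i \in N$ then yields $v(S, \langle u_i\rangle_{i\in N})$ in total time $O(nK^2)$, which is polynomial in the size of the input.

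The hard part is really just the first step---spotting the generating-function / Poisson--binomial reformulation; once that is in place the rest is a routine dynamic program together with a bookkeeping argument on the running time. The only minor point to keep track of is that the outer sum in Equation~\eqref{eq:score} runs from $j = 1$, so the $j = 0$ term (corresponding to no committee member voting with $i$) is simply omitted, which causes no difficulty; and, as is standard in this setting, one works in the arithmetic model where each operation on the real numbers $\sw(j)$ and $u_{i,c}$ takes unit time.
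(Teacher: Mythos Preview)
Your proposal is correct and is essentially the same argument as the paper's: the paper computes, for each voter $i$, a dynamic-programming table $A[j][\ell]$ giving the probability that exactly $\ell$ of the first $j$ committee members vote with $i$, via the recurrence $A[j][\ell] = u_{i,c_j}\,A[j-1][\ell-1] + (1-u_{i,c_j})\,A[j-1][\ell]$, which is precisely your incremental multiplication of the running polynomial by the linear factor $(1-u_{i,c_j}) + u_{i,c_j}\,x$. The only difference is packaging---you phrase it as extracting coefficients of the generating function $P_i(x)$, the paper phrases it as filling a table---but the computation, the recurrence, and the $O(nK^2)$ running time are identical.
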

\begin{proof}
For each voter $i \in N$ and each committee $S$, the expected ultimate satisfaction of $i$ from $S$ can be computed by dynamic programming. Let us sort the members of $S$ in some arbitrary order, so that $S = \{c_1, c_2, \ldots, c_K\}$. For each $j$, $0 \leq j \leq K$ and for each $\ell$,  $0 \leq \ell \leq K$, we define the value $A[j][\ell]$ as the probability that exactly $\ell$ members from the set $\{c_1, \ldots, c_j\}$ will vote according to $i$'s preferences. Naturally, $A[0][0] = 1$, and for $\ell > 0$, $A[0][\ell] = 0$. To compute the remaining values of the table $A$ we can use the following relation:
\begin{align*}
A[j][\ell] = u_{i, c_{j}} \cdot A[j-1][\ell-1] + (1 - u_{i, c_{j}}) \cdot A[j-1][\ell] \textrm{.} 
\end{align*}
The value of the expected ultimate satisfaction of $i$ can be obtained by computing a linear combination of the values $A[K][\ell]$, i.e., by computing $\sum_{\ell = 1}^{|A|} \sw(\ell) \cdot A[K][\ell]$. This completes the proof.
\end{proof}

\end{document}